\def\BibTeX{{\rm B\kern-.05em{\sc i\kern-.025em b}\kern-.08em
		T\kern-.1667em\lower.7ex\hbox{E}\kern-.125emX}}
\pgfplotsset{compat=1.16}
\DeclareSymbolFont{cyrletters}{OT2}{wncyr}{m}{n}
\DeclareMathSymbol{\Sha}{\mathalpha}{cyrletters}{"58}
\DeclareMathOperator*{\argmax}{arg\,max}
\newcommand{\E}{\mathbb{E}}
\newcommand{\FT}[1]{\mathrm{FT}\left\lbrace#1\right\rbrace}
\newcommand{\IFT}[1]{\mathrm{FT}^{-1}\!\left\lbrace#1\right\rbrace}
\newcommand{\DTFT}[1]{\mathrm{DTFT}\left\lbrace#1\right\rbrace}
\newtheorem{lemma}{Lemma}
\newtheorem{proposition}{Proposition}
\newtheorem{theorem}{Theorem}
\newcommand{\iec}{i.e., }
\newcommand{\wrt}{w.r.t.\ }
\title{ON THE ACQUISITION OF STATIONARY SIGNALS USING UNIFORM ADCS}
\name{Peter Neuhaus,
	Nir Shlezinger,
	Meik D\"orpinghaus,
	Yonina C. Eldar,
	and Gerhard Fettweis%
	\thanks{%
		This work has received funding
		from the German Federal Ministry of Education and Research (BMBF) (project E4C, contract number 16ME0189, and project VERITAS, contract number 01IS18073),
		from the European Union’s Horizon 2020 research and innovation program under grant No. 646804-ERC-COG-BNYQ, from the Israel Science Foundation under grant No. 0100101, and from QuantERA grant C’MON-QSENS!.
	}%
	\thanks{
		P. Neuhaus, M. D\"orpinghaus, and G. Fettweis are with the Vodafone Chair Mobile Communications Systems, Technische Universität Dresden, 01062 Dresden, Germany (e-mail: \{peter\_friedrich.neuhaus, meik.doerpinghaus, gerhard.fettweis\}@tu-dresden.de).
		G. Fettweis is also with the Centre for Tactile Internet with Human-in-the-Loop (CeTI) of TU Dresden.\looseness-1
	}%
	\thanks{
		N. Shlezinger is with the School of ECE, Ben-Gurion University of the Negev, Be'er-Sheva 84105, Israel (e-mail: nirshl@bgu.ac.il).
	}%
	\thanks{
		Y. C. Eldar is with the Faculty of Math and CS, Weizmann Institute of Science, Rehovot 7610001, Israel (e-mail:  yonina.eldar@weizmann.ac.il).}%
}
\address{\vspace{-9mm}}
\begin{document}
	\ninept
	\maketitle
	\begin{abstract}
		In this work, we consider the acquisition of stationary signals using uniform analog-to-digital converters (ADCs), i.e., employing uniform sampling and scalar uniform quantization. We jointly optimize the pre-sampling and reconstruction filters to minimize the time-averaged mean-squared error (TMSE) in recovering the continuous-time input signal for a fixed sampling rate and quantizer resolution and obtain closed-form expressions for the minimal achievable TMSE. We show that the TMSE-minimizing pre-sampling filter omits aliasing and discards weak frequency components to resolve the remaining ones with higher resolution when the rate budget is small. In our numerical study, we validate our results and show that sub-Nyquist sampling often minimizes the TMSE under tight rate budgets at the output of the ADC.\looseness-1
	\end{abstract}
	\begin{keywords}
		analog-to-digital conversion, estimation, filtering. 
	\end{keywords}

	\section{Introduction}\label{sec:intro}
	Analog-to-digital conversion of \gls{ct} processes plays a key role in digital signal processing systems.
	This conversion involves two steps:
	First, the input is sampled, yielding a \gls{dt} representation of the  \gls{ct} signal, and then the samples are mapped onto a finite bit representation, \iec the samples are quantized.
	Such acquisition is typically implemented using uniform \glspl{adc}, which sample at a fixed rate and convert each sample into a digital representation using a uniform partition of the real line. 
	Traditionally, sampling and quantization have been studied independently, with classic results including the Shannon-Nyquist sampling theorem \cite{nyquist1928certain,shannon1949communication} and the 6dB-per-bit rule-of-thumb for high-resolution quantization \cite{bennett1948spectra}.
	A comprehensive overview of works on sampling and quantization can be found in \cite{843002,eldar2015sampling} and \cite{gray1998quantization}, respectively.\looseness-1
	
	While sampling and quantization are often studied separately, \glspl{adc} are typically implemented as part of an overall acquisition system, which also includes analog and digital filters.
	The choice of these filters can contribute to the ability to reconstruct a signal from its digital representation \cite{shlezinger2019joint,644563,1054019,1090615}.
	The \gls{mse}-minimizing pre-filter and recovery filter when considering only sampling and only quantization have been studied in \cite{shlezinger2019joint} and \cite{644563}, respectively.
	The work \cite{1054019} considered both sampling and quantization and numerically optimized the pre-sampling and reconstruction filters with respect to the \gls{mse} distortion in recovering a \gls{ct} \gls{wss} Gaussian process.
	Therein, the optimal pre-sampling filter was only approximated, as it was derived without taking quantization into account, and the resulting \gls{mse} in recovering the input was evaluated numerically.
	In \cite{1090615}, the authors studied a similar setting and analytically characterized the \gls{mse}-minimizing pre-sampling and recovery filters as well as the corresponding minimal achievable \gls{mse}.
	However, none of these works considered both sampling and quantization, while faithfully modeling the quantization distortion.\looseness-1
	
	The fundamental distortion limits in recovering \gls{wss} Gaussian processes from digital representations acquired using filtering, sampling, and quantization have been studied in \cite{8350400,8416707}.
	To characterize the limits, these works allowed the quantization procedure to implement any form of lossy source coding, resulting in a setup denoted as \emph{\gls{adx}}.
	Implementing such acquisition systems involves complex vector quantizers that jointly map an arbitrarily large number of samples onto a discrete representation.
	In addition, the authors of \cite{8350400,8416707} also considered a \gls{pcm} setting, where the acquisition system uses scalar quantizers, which map each sample onto a discrete representation using the same mapping.
	Nonetheless, the effect of quantization in that setting is based on a model which only holds when using fine-resolution non-uniform quantization whose decision regions are tailored to the input distribution.
	Consequently, the resulting model does not reflect the operation of practical acquisition systems, especially when using low-resolution and uniform \glspl{adc}.
	In our previous work \cite{neuhaus2021task}, we circumvented this problem by considering non-subtractive dithered quantization and aimed to recover a random parameter vector which is a linear function of an observed multivariate input process.
	\looseness-1
	
	\begin{figure*}
		\centering
		\includegraphics[width=0.9\textwidth]{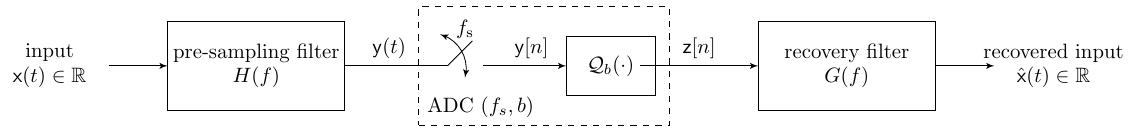}
		\caption{Overview of the system model. Our goal is to recover the process $\mathsf{x}(t)$ under a rate constraint $R = f_\mathrm{s} \cdot b$.}
		\label{fig:system_model}
	\end{figure*}
	
	In contrast, we consider recovering univariate \gls{ct} \gls{wss} signals in this work.
	Particularly, we focus on acquisition systems utilizing conventional uniform \glspl{adc}, and jointly design the pre-sampling filter and the reconstruction filter to minimize the \gls{tmse} in recovering the input when operating under a rate budget at the output of the \gls{adc}.
	To be able to optimize the overall system without imposing distortion models which hold for high-resolution \glspl{adc}, we adopt the approach used in \cite{shlezinger2018hardwarelimited,neuhaus2021task, 8736805} and model the uniform \glspl{adc} as implementing non-subtractive dithered quantization within their dynamic range \cite{gray1993ditheredQuantizers}.
	The resulting model is known to be a faithful approximation of the distortion induced by conventional (including non-dithered) uniform quantizers at arbitrary resolutions for a broad family of input distributions \cite{widrow1996statistical}.   
	Under this model, we analytically characterize the \gls{tmse}-minimizing linear pre-sampling and linear recovery filter as well as the minimal achievable \gls{mse} for a given sampling rate and quantizer resolution.
	Our solution for the pre-sampling filter is shown to be superior to the solution for the \gls{pcm} system given in \cite{8350400,8416707}.
	Moreover, we show numerically that combining the derived filters with conventional, \iec non-dithered quantizers, yields a reduced \gls{mse}, hence, demonstrating the practical value of our contributions.\looseness-1
	
	Throughout this work, random quantities are denoted by sans-serif letters, e.g., $\mathsf{x}$, whereas $x$ is a deterministic quantity.
	We use $j$, $\ast$, $\mathbb{E}\{\cdot\}$, and $\FT{\cdot}$ to denote the imaginary unit, convolution, stochastic expectation, and \gls{ft}, respectively.
	We use $\mathds{1}_{\mathcal{A}(x)}(x)$ to denote the indicator function, which is $1$ when the condition $\mathcal{A}(x)$ holds and $0$ otherwise, and $(x)^+ = \mathrm{max}(0,x)$.
	The sets of natural, integer, real, and complex numbers are written as $\mathbb{N}$, $\mathbb{Z}$, $\mathbb{R}$, and $\mathbb{C}$, respectively.\looseness-1
	
	\section{System Model}\label{sec:system_model}
	\subsection{Acquisition System}\label{sec:acquistion_model}
	We consider the acquisition of a zero-mean \gls{wss} \gls{ct} process $\mathsf{x}(t) \in \mathbb{R}$, $t \in \mathbb{R}$, into a digital representation. 
	We focus on bandlimited inputs, \iec we assume that the \gls{psd} of  $\mathsf{x}(t)$, denoted $S_\mathsf{x}(f)$, is bandlimited with support $\big(-\frac{f_\mathrm{nyq}}{2},\frac{f_\mathrm{nyq}}{2}\big)$, \iec $S_\mathsf{x}(f) = 0$ for all $|f| \geq \frac{f_\mathrm{nyq}}{2}$.
	The conversion system consists of a linear pre-sampling filter $H(f) \in \mathbb{C}$, an \gls{adc} implementing uniform sampling and scalar uniform quantization, and a linear recovery filter $G(f) \in \mathbb{C}$.
	
	The resulting acquisition system, which is illustrated in \cref{fig:system_model}, filters the input by the pre-sampling filter $H(f)$ and subsequently uniformly samples it with sampling rate $f_\mathrm{s}$. The resulting samples are
	\begin{equation}
		\mathsf{y}[n] = \mathsf{y}(n T_\mathrm{s}) = \left( \mathsf{x} * h \right) (n T_\mathrm{s}),\quad  n \in \mathbb{Z},
		\label{eq:def_y_n}
	\end{equation}
	with $h(t) = \IFT{H(f)}$ and $T_\mathrm{s} = \frac{1}{f_\mathrm{s}}$.
	After sampling, $\mathsf{y}[n]$ is quantized by a uniform scalar mid-rise quantizer with an amplitude resolution of $b$ bits, \iec it can produce $2^b$ distinct output values.
	The (one-sided) dynamic range of the quantizer is denoted as $\gamma > 0$, and the mid-rise quantization function is 
	\begin{equation}
		q_b( x' ) = \begin{cases}
			\Delta \left( \left\lfloor \frac{x'}{\Delta} \right\rfloor + \frac{1}{2} \right), & \mathrm{for~} \vert x' \vert < \gamma\\
			\mathrm{sign} \left( x' \right) \left( \gamma - \frac{\Delta}{2} \right), & \mathrm{otherwise,}
		\end{cases}
		\label{eq:def_mid-rise_quantization}
	\end{equation}
	where $\Delta = \frac{2 \gamma}{2^b}$ is the quantization step size, $\lfloor \cdot \rfloor$ denotes rounding to the next smaller integer, and $\mathrm{sign}(\cdot)$ is the signum function. The overall bit rate is thus $R = f_\mathrm{s} \cdot b$ bits per second.
	
	Similar to \cite{shlezinger2018hardwarelimited,neuhaus2021task}, we model the quantizers as implementing \emph{non-subtractive dithered quantization} to obtain an analytically tractable system model.
	Such quantizers add a random \emph{dither} signal to the input before quantization \cite{gray1993ditheredQuantizers}.
	Hence, the quantizer outputs are given by
	\begin{equation}
		\mathsf{z}[n] = \mathcal{Q}_{b}\left( \mathsf{y}[n] \right) = q_b( \mathsf{y}[n] + \mathsf{w}[n] )
		= \mathsf{y}[n] + \mathsf{e}[n].
		\label{eq:def_z_n}
	\end{equation}
	Here, $\mathsf{w}[n]$ denotes the zero-mean dither random process, which is independent and identically distributed (i.i.d.) and mutually independent of the input process, while $\mathsf{e}[n]$ is the quantization distortion.
	For non-overloaded \glspl{adc}, \iec for inputs whose magnitude does not exceed $\gamma$, dithering can ensure that the first and second moments of $\mathsf{e}[n]$ are independent of the input while minimizing the latter by choosing the probability density function of  $\mathsf{w}[n]$ to be a triangular distribution with a width of $2 \Delta$ \cite[Sec. III.C]{wannamaker2000nonsubtractiveDither}.
	Then, to obtain a negligible overload probability,   we set the dynamic range $\gamma$ to a multiple $\eta$ of the standard deviation of the dithered input, \iec\looseness-1
	\begin{equation}
		\gamma^2 =  \eta^2 \, \mathbb{E}\{ (\mathsf{y}[n] + \mathsf{w}[n])^2 \}.
		\label{eq:def_dynamic_range_squared}
	\end{equation}
	By Chebychev's inequality, \eqref{eq:def_dynamic_range_squared} guarantees that the overload probability is not larger than $\eta^{-2}$ for any input distribution \cite[eq.~(5-88)]{papoulis2001probability}.
	The motivation for using the above model stems from the fact that it rigorously yields  a tractable distortion model. In particular, for a vanishing overload probability, it  follows from \cite[Th.~2]{gray1993ditheredQuantizers}	that the autocorrelation function of $\mathsf{e}[n]$ is given by
	$
	R_{\mathsf{e}}[l] = \mathbb{E} \left\lbrace \mathsf{e}[n+l] \mathsf{e}^T[n] \right\rbrace = \frac{\Delta^2}{4} \delta[l],
	\label{eq:def_R_e}
	$
	where $\delta[n]$ denotes the Kronecker delta function.
	While the resulting model of the quantization error rigorously holds for non-overloading non-subtractive dithered quantizers, it also approximately holds for conventional, \iec non-dithered uniform quantizers applied to a broad range of inputs, and particularly sub-Gaussian signals \cite{widrow1996statistical}.
	This is also numerically verified in \cref{sec:numerical_results}.\looseness-1
	
	\subsection{Problem Formulation}\label{sec:problem_formulation}
	We consider the recovery of $\mathsf{x}(t)$ from its digital representation $ \mathsf{z}[n]$.
	To this aim, we focus on \emph{shift-invariant linear recovery} (cf.~\cite{4663942}), \iec a linear recovery filter $G(f)$ is employed, which yields
	\begin{equation}
		\hat{\mathsf{x}}(t) =  \sum_{n \in \mathbb{Z}} g(t - n T_\mathrm{s}) \, \mathsf{z}[n],
		\label{eq:def_xHat}
	\end{equation}
	where $g(t) = \IFT{G(f)}$.
	Our goal is to find the pre-sampling filter $H(f)$ and the  recovery filter $G(f)$, which minimize the reconstruction error for a fixed sampling rate $f_\mathrm{s}$ and quantizer resolution $b$.
	Because $\hat{\mathsf{x}}(t)$ is \emph{cyclostationary} with period $T_\mathrm{s}$ (cf. \cite[Ch.~12]{gardner1986introduction}), we aim to minimize the \gls{tmse} \cite{4663942}, \iec
	\begin{equation}
		\min_{H(f), \, G(f)} \quad \frac{1}{T_\mathrm{s}} \int_{0}^{T_\mathrm{s}} \E \left \lbrace \left \vert \hat{\mathsf{x}}(t) - \mathsf{x}(t) \right \vert^2 \right \rbrace \mathrm{d}t.
		\label{eq:main_objective}
	\end{equation}
	
	Note that while we model and optimize a practical acquisition system architecture in this work, the main focus of the works \cite{8350400,8416707} was to characterize the fundamental distortion limit of \emph{any} acquisition system employing a sampling rate $f_\mathrm{s}$ under the constraint that the samples are encoded with a bit rate $R$.
	Hence, the \gls{adx} results from \cite{8350400,8416707} provide a lower bound on the minimum achievable \gls{tmse} in \eqref{eq:main_objective}.\looseness-1
	
	\section{Uniform ADC Based Acquisition System}\label{sec:main_results}
	Here, we derive the acquisition system which minimizes the \gls{tmse} in recovering $\mathsf{x}(t)$. 
	We obtain the \gls{tmse}-minimizing recovery filter and pre-sampling filter in Subsection~\ref{sec:main_results:Filter}
	and discuss our solution in Subsection~\ref{sec:main_results:discussion}.\looseness-1
	
	\subsection{TMSE Minimizing Filters}\label{sec:main_results:Filter}
	We begin by obtaining the shift-invariant linear recovery filter $G(f)$, which minimizes the \gls{tmse} given in \eqref{eq:main_objective}, for a given pre-sampling filter $H(f)$, a fixed sampling rate $f_\mathrm{s}$, and a fixed quantizer resolution $b$.
	The result is summarized in the following proposition.\looseness-1
	\begin{proposition}\label{prop:opt_recovery_filter}
		For a given pre-sampling filter $H(f)$, the \gls{tmse} minimizing linear recovery filter $G_\mathrm{o}(f)$ is given by
		\begin{equation}
			G_\mathrm{o}(f) = \frac{S_{\mathsf{x}}(f) \, H^*(f)}{ S_\mathsf{y}(e^{j 2 \pi f T_\mathrm{s}}) + \kappa T_\mathrm{s} \int_{-\frac{f_\mathrm{s}}{2}}^{\frac{f_\mathrm{s}}{2}} S_\mathsf{y}(e^{j 2 \pi f' T_\mathrm{s}}) \mathrm{d}f' }
			\label{eq:def_G_opt}
		\end{equation}
		where $S_\mathsf{y}(e^{j 2 \pi f T_\mathrm{s}}) = \frac{1}{T_\mathrm{s}} \sum_{k \in \mathbb{Z}} \left\vert H(f - k f_\mathrm{s}) \right\vert^2 \, S_{\mathsf{x}}(f - k f_\mathrm{s})$ denotes the \gls{psd} of $\mathsf{y}[n]$ and it holds $\kappa = \frac{\eta^2}{2^{2b}} ( 1 - \frac{2 \, \eta^2}{3 \, 2^{2b}} )^{-1}$.
		The resulting minimum achievable \gls{tmse}
		is given by
		\begin{equation}
			\mathrm{TMSE}(H(f)) = \int_{\mathbb{R}} \left( S_{\mathsf{x}}(f) - \frac{ \left\vert H(f) \right\vert^2 S^2_{\mathsf{x}}(f) }{ T_\mathrm{s} \, S_\mathsf{z}(e^{j 2 \pi f T_\mathrm{s}}) } \right) \mathrm{d}f,
			\label{eq:mse_prop_1}
		\end{equation}
		where $S_\mathsf{z}(e^{j 2 \pi f T_\mathrm{s}}) = S_\mathsf{y}(e^{j 2 \pi f T_\mathrm{s}}) + \kappa T_\mathrm{s} \int_{-\frac{f_\mathrm{s}}{2}}^{\frac{f_\mathrm{s}}{2}} S_\mathsf{y}(e^{j 2 \pi f' T_\mathrm{s}}) \mathrm{d}f'$ denotes the \gls{psd} of $\mathsf{z}[n]$.
		
	\end{proposition}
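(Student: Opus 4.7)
The plan is to express the \gls{tmse} entirely in the frequency domain via a Parseval-type argument, then minimize the resulting functional pointwise in $f$ because the objective decouples across frequencies once $H(f)$ is held fixed.

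First I would pin down $S_\mathsf{z}(e^{j 2 \pi f T_\mathrm{s}})$. Since $\mathsf{e}[n]$ is white with autocorrelation $(\Delta^2/4)\delta[l]$ and is independent of $\mathsf{y}[n]$, one has $S_\mathsf{z}(e^{j 2 \pi f T_\mathrm{s}}) = S_\mathsf{y}(e^{j 2 \pi f T_\mathrm{s}}) + \E\{\mathsf{e}^2[n]\}$. Inserting $\Delta = 2\gamma/2^{b}$ and the triangular-dither variance $\Delta^2/6$ into \eqref{eq:def_dynamic_range_squared} and solving for $\gamma^2$ gives $\E\{\mathsf{e}^2[n]\} = \kappa\,\sigma_\mathsf{y}^2$, where $\sigma_\mathsf{y}^2 = T_\mathrm{s}\!\int_{-f_\mathrm{s}/2}^{f_\mathrm{s}/2} S_\mathsf{y}(e^{j 2\pi f T_\mathrm{s}})\,\mathrm{d}f$, matching the claimed denominator in \eqref{eq:def_G_opt}.

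Next I would expand $\E\{|\hat{\mathsf{x}}(t)-\mathsf{x}(t)|^2\}$ into three pieces: $R_\mathsf{x}(0)$, a cross term based on $\E\{\mathsf{z}[n]\mathsf{x}(t)\} = (h\ast R_\mathsf{x})(nT_\mathrm{s}-t)$ (the dither $\mathsf{w}$ and the quantization error $\mathsf{e}$ drop out by independence from $\mathsf{x}$), and an auto term built from $R_\mathsf{z}[n-m]$. Substituting \eqref{eq:def_xHat} and averaging over $t\in[0,T_\mathrm{s})$, the identity $\frac{1}{T_\mathrm{s}}\!\int_0^{T_\mathrm{s}}\!\sum_{n\in\mathbb{Z}}\psi(t-nT_\mathrm{s})\,\mathrm{d}t = \frac{1}{T_\mathrm{s}}\!\int_\mathbb{R}\psi(u)\,\mathrm{d}u$ collapses the combined sum and windowed integral into a full-line integral. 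Applying Parseval, using that $h\ast R_\mathsf{x}$ has \gls{ft} $H(f)S_\mathsf{x}(f)$ and $\sum_l R_\mathsf{z}[l]e^{-j 2\pi f l T_\mathrm{s}} = S_\mathsf{z}(e^{j 2\pi f T_\mathrm{s}})$, yields
\begin{equation*}
\mathrm{TMSE} = \int_\mathbb{R}\! S_\mathsf{x}(f)\,\mathrm{d}f - \frac{2}{T_\mathrm{s}}\Re\!\int_\mathbb{R}\! G(f) H(f) S_\mathsf{x}(f)\,\mathrm{d}f + \frac{1}{T_\mathrm{s}}\!\int_\mathbb{R}\! |G(f)|^2 S_\mathsf{z}(e^{j 2\pi f T_\mathrm{s}})\,\mathrm{d}f.
\end{equation*}

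Because $S_\mathsf{z}$ and $S_\mathsf{x}$ do not depend on $G(f)$, the integrand is a convex quadratic in $G(f)$ at every $f$; completing the square (or setting the Wirtinger derivative \wrt $G^*(f)$ to zero) gives the unique minimizer $G_\mathrm{o}(f) = H^*(f) S_\mathsf{x}(f)/S_\mathsf{z}(e^{j 2\pi f T_\mathrm{s}})$, which is \eqref{eq:def_G_opt}. Substituting $G_\mathrm{o}$ back collapses the last two terms into $-T_\mathrm{s}^{-1}\!\int_\mathbb{R} |H(f)|^2 S_\mathsf{x}^2(f)/S_\mathsf{z}(e^{j 2\pi f T_\mathrm{s}})\,\mathrm{d}f$, recovering \eqref{eq:mse_prop_1}. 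The main obstacle is the bookkeeping around cyclostationarity: one must carefully combine $T_\mathrm{s}^{-1}\!\int_0^{T_\mathrm{s}}(\cdot)\,\mathrm{d}t$ with $\sum_{n\in\mathbb{Z}}$ before invoking Parseval on the line, and verify that aliasing enters only through the denominator $S_\mathsf{z}$ (via $S_\mathsf{y}$) and not the cross term. Once these manipulations are handled, the rest is routine.
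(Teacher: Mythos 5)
Your derivation is correct in substance and arrives at the same formulas, but it takes a slightly different route from the paper. The paper invokes the orthogonality principle $\E\{\hat{\mathsf{x}}(t)\mathsf{z}[n]\}=\E\{\mathsf{x}(t)\mathsf{z}[n]\}$, solves the resulting discrete normal equations in the Fourier domain to get $G(f)=S_{\mathsf{x}\mathsf{y}}(f)S_{\mathsf{z}}^{-1}(e^{j2\pi fT_\mathrm{s}})$, and only then evaluates the residual error; you instead write the time-averaged error as an explicit quadratic functional of $G(f)$ (after collapsing $\frac{1}{T_\mathrm{s}}\int_0^{T_\mathrm{s}}\sum_n(\cdot)\,\mathrm{d}t$ into a full-line integral and applying Parseval) and minimize pointwise by completing the square. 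The two are equivalent --- orthogonality is the stationarity condition of your quadratic form --- but your version has the small advantage of sidestepping the paper's remark that minimizing the instantaneous MSE for every $t$ simultaneously is ``feasible here'': by averaging over $t$ before optimizing, the pointwise-in-$f$ minimization is automatically legitimate. The computation of the effective noise floor $\Delta^2/4=\kappa T_\mathrm{s}\int_{-f_\mathrm{s}/2}^{f_\mathrm{s}/2}S_{\mathsf{y}}(e^{j2\pi f'T_\mathrm{s}})\,\mathrm{d}f'$ from \eqref{eq:def_dynamic_range_squared} is identical to the paper's.

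One point needs repair: you justify dropping the cross terms involving $\mathsf{e}[n]$, and writing $S_{\mathsf{z}}=S_{\mathsf{y}}+\E\{\mathsf{e}^2[n]\}$, by claiming the quantization error is \emph{independent} of $\mathsf{y}[n]$ and $\mathsf{x}(t)$. That is false for non-subtractive dithered quantization: $\mathsf{e}[n]=q_b(\mathsf{y}[n]+\mathsf{w}[n])-\mathsf{y}[n]$ is a function of $\mathsf{y}[n]$ and $\mathsf{w}[n]$, so it cannot be independent of the input. What the dither guarantees (for non-overloading inputs, \cite[Th.~2]{gray1993ditheredQuantizers}) is only that the conditional first and second moments of $\mathsf{e}[n]$ are input-independent, i.e., $\E\{\mathsf{e}[n]\,|\,\mathsf{y}[n]\}=0$ and $R_{\mathsf{e}}[l]=\tfrac{\Delta^2}{4}\delta[l]$. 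The conclusions you need ($\E\{\mathsf{x}(t)\mathsf{e}[n]\}=0$ and the additive white term in $S_{\mathsf{z}}$) still follow, but via the tower property along the Markov chain $\mathsf{x}(t)\rightarrow\mathsf{y}[n]\rightarrow\mathsf{e}[n]$ --- this is exactly what the paper isolates as Lemma~\ref{lem:x_e_uncorr}. You should replace ``independence'' with this conditional-moment argument; as written, the justification is wrong even though the equalities it is meant to support are true.
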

	\begin{proof}
		The proof is provided in Appendix~\ref{sec:appendix_A}.
	\end{proof}
	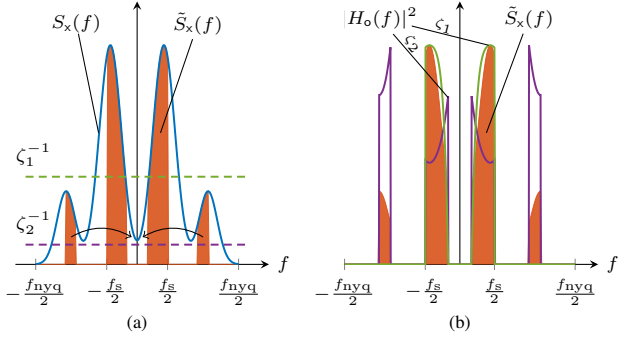
\begin{figure}[t!]
		\centering
		\begin{minipage}[t]{0.49\columnwidth}
			\centering
\definecolor{mycolor1}{rgb}{0.0000,0.4470,0.7410}%
\definecolor{mycolor2}{rgb}{0.8500,0.3250,0.0980}%
\definecolor{mycolor3}{rgb}{0.9290,0.6940,0.1250}%
\definecolor{mycolor4}{rgb}{0.4940,0.1840,0.5560}%
\definecolor{mycolor5}{rgb}{0.4660,0.6740,0.1880}%
\definecolor{mycolor6}{rgb}{0.3010,0.7450,0.9330}%
\definecolor{mycolor7}{rgb}{0.6350,0.0780,0.1840}%

\pgfplotsset{
    every axis/.append style={
        extra description/.code={
            \node at (0.48,-0.225) {(a)};
        },
    },
}

\begin{tikzpicture}[font=\scriptsize,spy using outlines=
	{circle, magnification=5, connect spies}]
	\begin{axis}[%
	width=1.95in,
    height=2in,
    ytick=\empty,
    xtick={-0.5,-0.15,0.15,0.5},
    xticklabels={$-\frac{f_\mathrm{nyq}}{2}$,$-\frac{f_\mathrm{s}}{2}$,$\frac{f_\mathrm{s}}{2}$,$\frac{f_\mathrm{nyq}}{2}$},
    axis x line=bottom,
    axis y line=left,
    axis y line=center,
	xmin=-0.6,
	xmax=0.65,
	xlabel={$f$},
	x label style={at={(axis description cs:1,0)},anchor=west},
	ymin=0,
	ymax=1.2,
	legend columns=1,
	legend style={at={(1,1)},anchor=north east,font=\tiny},
	legend style={legend cell align=left, align=left, draw=white!15!black}
	]
		
		\addplot [name path=tilde_S_x, color=mycolor2, line width=0.1pt,  opacity=0.9]
		table[x index=0,y index=2]{./data/data_illu_Sx_tildeSx_Sy.dat};
		
		\addplot [color=mycolor1, line width=0.75pt]
		table[x index=0,y index=1]{./data/data_illu_Sx_tildeSx_Sy.dat};
        
        \path[name path=xAxis] (axis cs:-0.5,0) -- (axis cs:0.5,0);

        \addplot [
            thick,
            color=mycolor2,
            fill=mycolor2, 
            fill opacity=0.9
        ]
        fill between[
            of=tilde_S_x and xAxis,
            soft clip={domain=-0.5:0.5},
        ];
        
        \addplot [mycolor4, densely dashed, domain=-0.55:0.55, line width=0.75pt] {0.09};
        \addplot [mycolor5, densely dashed, domain=-0.55:0.55, line width=0.75pt] {0.4};
        \node (threshL) at (axis cs:-0.51,0.5){$\zeta^{-1}_{1}$};
        \node (threshH) at (axis cs:-0.51,0.19){$\zeta^{-1}_{2}$};
		
		
		\node (source1) at (axis cs:-0.3,1.1){$S_\mathsf{x}(f)$};
		\coordinate (destination1) at (axis cs:-0.19,0.6){};
		\draw[-,shorten <= -1mm](source1)--(destination1);
		
		\node (source2) at (axis cs:0.3,1.1){$\tilde{S}_\mathsf{x}(f)$};
		\coordinate (destination2) at (axis cs:0.115,0.6){};
		\draw[-,shorten <= -0.5mm](source2)--(destination2);
		
		
		\coordinate (startLeft) at (axis cs:-0.325,0.125);
		\coordinate (endLeft) at (axis cs:-0.0275,0.125);
		\draw[->] (startLeft) to [out=30,in=150] (endLeft);
		\coordinate (startRight) at (axis cs:0.325,0.125);
		\coordinate (endRight) at (axis cs:0.0275,0.125);
		\draw[->] (startRight) to [out=150,in=30] (endRight);
		
	\end{axis}
	
\end{tikzpicture}%
		\end{minipage}%
		\begin{minipage}[t]{0.49\columnwidth}
			\centering
\definecolor{mycolor1}{rgb}{0.0000,0.4470,0.7410}%
\definecolor{mycolor2}{rgb}{0.8500,0.3250,0.0980}%
\definecolor{mycolor3}{rgb}{0.9290,0.6940,0.1250}%
\definecolor{mycolor4}{rgb}{0.4940,0.1840,0.5560}%
\definecolor{mycolor5}{rgb}{0.4660,0.6740,0.1880}%
\definecolor{mycolor6}{rgb}{0.3010,0.7450,0.9330}%
\definecolor{mycolor7}{rgb}{0.6350,0.0780,0.1840}%

\pgfplotsset{
    every axis/.append style={
        extra description/.code={
            \node at (0.455,-0.225) {(b)};
        },
    },
}

\begin{tikzpicture}[font=\scriptsize,spy using outlines=
	{circle, magnification=5, connect spies}]
	\begin{axis}[%
	width=1.95in,
    height=2in,
    ytick=\empty,
    xtick={-0.5,-0.15,0.15,0.5},
    xticklabels={$-\frac{f_\mathrm{nyq}}{2}$,$-\frac{f_\mathrm{s}}{2}$,$\frac{f_\mathrm{s}}{2}$,$\frac{f_\mathrm{nyq}}{2}$},
    axis x line=bottom,
    axis y line=left,
    axis y line=center,
	xmin=-0.5,
	xmax=0.6,
	xlabel={$f$},
	x label style={at={(axis description cs:1,0)},anchor=west},
	ymin=0,
	ymax=1.2,
	legend columns=1,
	legend style={at={(1,1)},anchor=north east,font=\tiny},
	legend style={legend cell align=left, align=left, draw=white!15!black}
	]
		
		
		\addplot [name path=tilde_S_x, color=mycolor2, line width=0.1pt,  opacity=0.9]
		table[x index=0,y index=2]{./data/data_illu_Sx_tildeSx_Sy.dat};
		
		\path[name path=xAxis] (axis cs:-0.5,0) -- (axis cs:0.5,0);

        \addplot [
            thick,
            color=mycolor2,
            fill=mycolor2, 
            fill opacity=0.9
        ]
        fill between[
            of=tilde_S_x and xAxis,
            soft clip={domain=-0.5:0.5},
        ];
        
		\addplot [color=mycolor4, line width=0.75pt]
		table[x index=0,y index=1]{./data/data_illu_H2_opt_highR_lowR.dat};
		
		\addplot [color=mycolor5, line width=0.75pt]
		table[x index=0,y index=2]{./data/data_illu_H2_opt_highR_lowR.dat};
		
		\node (source1) at (axis cs:-0.35,1.125){$|H_\mathsf{o}(f)|^2$};
		\coordinate (destination1) at (axis cs:-0.05,0.76){};
		\coordinate (destination2) at (axis cs:0.125,1){};
		\draw[-,shorten <= -0.5mm](source1)--(destination1) node[pos=0.2,sloped,above=-2.5pt]{\tiny$ \zeta_2$};
		\draw[-,shorten <= -2mm](source1)--(destination2) node[pos=0.25,sloped,above=-2.5pt]{\tiny$ \zeta_1$};;
		
		\node (source3) at (axis cs:0.3,1.125){$\tilde{S}_\mathsf{x}(f)$};
		\coordinate (destination3) at (axis cs:0.115,0.6){};
		\draw[-,shorten <= -0.5mm](source3)--(destination3);
		
	\end{axis}
	
\end{tikzpicture}%
		\end{minipage}%
		\vspace{-3mm}
		\caption{Illustration of the operation of $|H_\mathrm{o}(f)|^2$, where $\zeta_1$ and $\zeta_2$ correspond to a very low and a moderate \gls{adc} amplitude resolution.\vspace{-3mm}}
		\label{fig:illu_H} 
	\end{figure}
	
	Our next goal is to find the \gls{tmse}-minimizing pre-sampling filter, denoted as $H_\mathrm{o}(f)$, by minimizing the \gls{tmse} expression given in \eqref{eq:mse_prop_1} \wrt $H(f)$.
	The result is summarized in the following theorem.\looseness-1
	
	\begin{theorem}\label{theo:optimal_pre-sampling_filter}
		For a fixed sampling rate $f_\mathrm{s}$, and  quantizer resolution $b$, the \gls{tmse} minimizing pre-sampling filter $H_\mathrm{o}(f)$ is characterized by
		\begin{equation}
			\left\vert H_\mathrm{o}\!\left(f - k f_\mathrm{s}\right) \right\vert^2 \!=\!
			\begin{cases}
				\frac{\left(\! \sqrt{ \zeta \, \tilde{S}_{\mathsf{x}}(f) } - 1 \!\right)^{\!+}}{2^{2b} \, \tilde{S}_{\mathsf{x}}(f)}, & k=\tilde{k}(f), \text{~} |f|<\frac{f_\mathrm{s}}{2} \\
				0 & {\rm otherwise},
			\end{cases}
			\label{eq:theorem_def_H_tilde_opt}
		\end{equation}
		with $\tilde{k}(f) = \argmax_{k \in \mathbb{Z}} S_{\mathsf{x}}(f - k f_\mathrm{s})$, $\tilde{S}_{\mathsf{x}}(f) = S_{\mathsf{x}}(f - \tilde{k}(f) f_\mathrm{s}) $, and  $\zeta$ chosen such that 
		$\kappa T_\mathrm{s} \int_{-\frac{f_\mathrm{s}}{2}}^{\frac{f_\mathrm{s}}{2}}\Big( \sqrt{ \zeta \, \tilde{S}_{\mathsf{x}}(f) } - 1 \Big)^+ \mathrm{d}f = 1$.
		Furthermore, the resulting minimum achievable \gls{tmse} is
		\begin{equation}
			\label{eq:theorem_def_TMSE_opt}
			\mathrm{TMSE}_\mathrm{o} \!=\!  \int_{\mathbb{R}} \! S_{\mathsf{x}}(f) \mathrm{d}f \!-\! \int_{-\frac{f_\mathrm{s}}{2}}^{\frac{f_\mathrm{s}}{2}} \! \frac{ \Big(\! \sqrt{ \zeta \, \tilde{S}_{\mathsf{x}}(f) } - 1 \!\Big)^{\!+}\! \tilde{S}_{\mathsf{x}}(f) }{ \Big(\! \sqrt{ \zeta \, \tilde{S}_{\mathsf{x}}(f) } - 1 \!\Big)^{\!+}\! + 1 } \mathrm{d}f.
		\end{equation}
	\end{theorem}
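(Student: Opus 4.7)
My plan is to reduce the optimization over $H$ to a scalar water-filling on the Nyquist band $(-f_\mathrm{s}/2,f_\mathrm{s}/2)$. Letting $A:=|H|^2$ and $P:=\int_{\mathbb{R}} A S_\mathsf{x}\,\mathrm{d}f$, I would first fold the gain term of \eqref{eq:mse_prop_1} into a single Nyquist period using the periodicity of $S_\mathsf{z}(e^{j2\pi fT_\mathrm{s}})$, together with the identity $\kappa T_\mathrm{s}\int_{-f_\mathrm{s}/2}^{f_\mathrm{s}/2} S_\mathsf{y}(e^{j2\pi fT_\mathrm{s}})\,\mathrm{d}f=\kappa P$. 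This yields the equivalent objective $\int_{\mathbb{R}} S_\mathsf{x}\,\mathrm{d}f - \int_{-f_\mathrm{s}/2}^{f_\mathrm{s}/2} N(f)/D(f)\,\mathrm{d}f$, with $N(f)=\sum_k A(f-kf_\mathrm{s})S_\mathsf{x}^2(f-kf_\mathrm{s})$ and $D(f)=\sum_k A(f-kf_\mathrm{s})S_\mathsf{x}(f-kf_\mathrm{s})+T_\mathrm{s}\kappa P$.

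Next I would rule out aliasing at the optimum. For fixed $f\in(-f_\mathrm{s}/2,f_\mathrm{s}/2)$, set $a_k:=A(f-kf_\mathrm{s})S_\mathsf{x}(f-kf_\mathrm{s})\ge 0$; then the pointwise denominator $D(f)$ and the global variance $P$ depend on $A$ only through the sum $\sum_k a_k$, whereas $N(f)=\sum_k a_k S_\mathsf{x}(f-kf_\mathrm{s})\le\tilde{S}_\mathsf{x}(f)\sum_k a_k$ with equality iff all mass is at $k=\tilde{k}(f)$. Concentrating the mass at $k=\tilde{k}(f)$ therefore never decreases the objective, so the optimal $A$ must be supported on the non-aliasing image of the Nyquist band, matching the case structure of \eqref{eq:theorem_def_H_tilde_opt}.

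I would then introduce $v(f):=A(f-\tilde{k}(f)f_\mathrm{s})\tilde{S}_\mathsf{x}(f)\ge 0$ on the Nyquist band, reducing the problem to $\max_{v\ge 0}\int v\tilde{S}_\mathsf{x}/(v+\lambda)\,\mathrm{d}f$ with $P=\int v\,\mathrm{d}f$ and $\lambda=T_\mathrm{s}\kappa P$. The key observation is a scale invariance: under $H\mapsto cH$ both $v$ and $\lambda$ scale by $c^2$ and the integrand $v\tilde{S}_\mathsf{x}/(v+\lambda)$ is unchanged, so the overall magnitude of $H$ is a free gauge. Fixing $P$ (and hence $\lambda$) temporarily, concavity of $v\tilde{S}_\mathsf{x}/(v+\lambda)$ in $v\ge 0$ makes KKT sufficient; stationarity with multiplier $\nu$ for $\int v=P$ gives $\lambda\tilde{S}_\mathsf{x}(f)/(v(f)+\lambda)^2=\nu$ on $\{v>0\}$, which solves to the water-filling form
\begin{equation*}
v(f)=\lambda\bigl(\sqrt{\zeta\tilde{S}_\mathsf{x}(f)}-1\bigr)^+,\qquad \zeta:=1/(\nu\lambda).
\end{equation*}
Inserting this into $P=\int v\,\mathrm{d}f$ and using $\lambda=T_\mathrm{s}\kappa P$ cancels $P$ and yields the consistency relation $\kappa T_\mathrm{s}\int_{-f_\mathrm{s}/2}^{f_\mathrm{s}/2}(\sqrt{\zeta\tilde{S}_\mathsf{x}(f)}-1)^+\,\mathrm{d}f=1$ that pins down $\zeta$.

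To recover the explicit formulas I would exploit the gauge freedom and choose $\lambda=2^{-2b}$, which upon dividing by $\tilde{S}_\mathsf{x}(f)$ reproduces \eqref{eq:theorem_def_H_tilde_opt}; substituting $v=\lambda(\sqrt{\zeta\tilde{S}_\mathsf{x}}-1)^+$ back into $\int v\tilde{S}_\mathsf{x}/(v+\lambda)$ cancels $\lambda$ from both numerator and denominator and produces \eqref{eq:theorem_def_TMSE_opt}. The step I expect to be the main obstacle is this self-consistency: because $\lambda=T_\mathrm{s}\kappa P$ is itself a functional of $v$, the Lagrangian initially looks coupled in a nonlinear way, and recognizing the multiplicative scale invariance of the objective is what decouples the problem and shows that $\zeta$ is determined uniquely, independent of the filter's overall magnitude.
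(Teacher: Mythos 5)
Your proof is correct and follows essentially the same route as the paper's: fold the objective onto one Nyquist band, rule out aliasing by the same mass-concentration exchange argument (the paper's Lemma~2), and solve the resulting concave problem by KKT water-filling. The only substantive difference is presentational: the paper imposes the normalization $\bar{\kappa}\,T_\mathrm{s}\int\sum_k\bar{H}(f-kf_\mathrm{s})\,\mathrm{d}f=1$ up front without comment, whereas you keep the power $P$ free and justify that normalization explicitly via the scale invariance under $H\mapsto cH$ before fixing the gauge $\lambda=2^{-2b}$ at the end --- a small but welcome clarification of a step the paper leaves implicit.
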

	\begin{proof}
		The proof is provided in Appendix~\ref{sec:appendix_B}.
	\end{proof}
	
	Theorem~\ref{theo:optimal_pre-sampling_filter} characterizes the operation of the \gls{tmse}-minimizing pre-sampling filter $H_\mathrm{o}(f)$.
	Note that the phase of $H_\mathrm{o}(f)$ can be chosen arbitrarily, whereas the phase of $G_\mathrm{o}(f)$ depends on $H(f)$.
	The optimal choice of the pre-sampling filter, i.e., $H_\mathrm{o}(f)$, accounts for both aliasing induced by sub-Nyquist sampling as well as distortion due to low-resolution quantization:
	In particular, the filter preserves only the most dominant spectral components aliased to each frequency after uniform sampling with rate ${f_\mathrm{s}}$ (via the parameter $\tilde{k}(f)$).
	Furthermore, it nullifies the weak spectral modes which are likely to be indistinguishable after uniform quantization (via the parameter $\zeta$).
	
	The operation of the analog filter is illustrated in Fig.~\ref{fig:illu_H}. First, Fig.~\ref{fig:illu_H}.(a) depicts a multi-modal input \gls{psd} $S_\mathsf{x}(f)$, the resulting $\tilde{S}_\mathsf{x}(f)$, and two different water-filling thresholds $\zeta^{-1}_1$ and $\zeta^{-1}_2$, which correspond to a very low and a moderate \gls{adc} amplitude resolution, respectively.
	Then, Fig.~\ref{fig:illu_H}.(b) shows the resulting \gls{tmse}-minimizing pre-sampling filter $|H_\mathrm{o}(f)|^2$ for $\zeta_1$ and $\zeta_2$.
	For a very low \gls{adc} amplitude resolution, i.e., for $\zeta_1$, only the dominant spectral components of $\tilde{S}_\mathsf{x}(f)$ are preserved, whereas in case of a moderate \gls{adc} amplitude resolution, i.e., for $\zeta_2$, all spectral components of $\tilde{S}_\mathsf{x}(f)$ are preserved and  $|H_\mathrm{o}(f)|^2$ converges to a whitening filter.
	
	\subsection{Discussion}\label{sec:main_results:discussion}
	\begin{figure}[t!]
		\centering
\definecolor{mycolor1}{rgb}{0.0000,0.4470,0.7410}%
\definecolor{mycolor2}{rgb}{0.8500,0.3250,0.0980}%
\definecolor{mycolor3}{rgb}{0.9290,0.6940,0.1250}%
\definecolor{mycolor4}{rgb}{0.4940,0.1840,0.5560}%
\definecolor{mycolor5}{rgb}{0.4660,0.6740,0.1880}%
\definecolor{mycolor6}{rgb}{0.3010,0.7450,0.9330}%
\definecolor{mycolor7}{rgb}{0.6350,0.0780,0.1840}%

\begin{tikzpicture}[font=\scriptsize,spy using outlines=
	{circle, magnification=5, connect spies}]
	\begin{axis}[%
	width=2.1in,
    height=2in,
    xtick=\empty,
    ytick=\empty,
    xtick={-0.5,0.5},
    axis x line=bottom,
    axis y line=left,
    axis y line=center,
	xmin=-0.55,
	xmax=0.55,
	xtick={-0.5,0.5},
    xticklabels={$-\frac{f_\mathrm{nyq}}{2}$,$\frac{f_\mathrm{nyq}}{2}$},
	xlabel={$f$},
	x label style={at={(axis description cs:1,0)},anchor=west},
	ymin=0,
	ymax=1.1,
	legend columns=1,
	legend style={at={(1.1,0.5)},anchor=west,font=\scriptsize},
	legend style={legend cell align=left, align=left, draw=white!15!black}
	]
        \addplot[domain=-0.5:0.5, samples=100, color=mycolor1, line width=0.75pt ]{1-2*abs(x)};
        \addlegendentry{$S_\mathsf{x}(f)$}
		
		\addplot[name path=tilde_S_x, color=mycolor2, line width=0.75pt,  opacity=0.9]
		table[x index=0,y index=1]{./data/data_illu_H2_opt_triang_R1_fs_div_fnyq0-9.dat};
		\addlegendentry{$|H_\mathrm{o}(f)|^2$ - \cref{theo:optimal_pre-sampling_filter}}
		
        \addplot[color=mycolor3, line width=0.75pt]
        table[row sep=crcr]{%
            -0.50 0\\
            -0.45 0\\
            -0.45 1\\
            0.45 1\\
            0.45 0\\
            0.5 0\\
        };
        \addlegendentry{$|H_\mathrm{o}(f)|^2$ - PCM \cite{8416707}}
        
		\coordinate (source) at (axis cs:-0.454,0.33){};
		\coordinate (destination) at (axis cs:0.454,0.33){};
		\draw[|-|,line width=0.75pt](source) -- (destination);
		\node (fsLabel) at (axis cs:-0.06,0.38) {\scriptsize $f_\mathrm{s}$};
		
		\coordinate (source2) at (axis cs:-0.379,0.66){};
		\coordinate (destination2) at (axis cs:0.379,0.66){};
		\draw[|-|,line width=0.75pt](source2) -- (destination2);
		\node (fHLabel) at (axis cs:-0.06,0.71) {\scriptsize $f_H$};
		
		\coordinate (vertLeftHigh) at (axis cs:-0.375,0.66){};
		\coordinate (vertLeftLow) at (axis cs:-0.375,0){};
		\draw[densely dotted, line width=0.75pt](vertLeftHigh) -- (vertLeftLow);
		\coordinate (vertRightHigh) at (axis cs:0.375,0.66){};
		\coordinate (vertRightLow) at (axis cs:0.375,0){};
		\draw[densely dotted, line width=0.75pt](vertRightHigh) -- (vertRightLow);
		
	\end{axis}
	
\end{tikzpicture}%
		\vspace{0.3mm}
		\caption{Comparison of $|H_\mathrm{o}(f)|^2$ from \cref{theo:optimal_pre-sampling_filter} to the filter of \cite{8416707} for \hbox{$R=1$}~bit per Nyquist interval.\vspace{-5mm}}
		\label{fig:illu_cmp_H2_opt} 
	\end{figure}
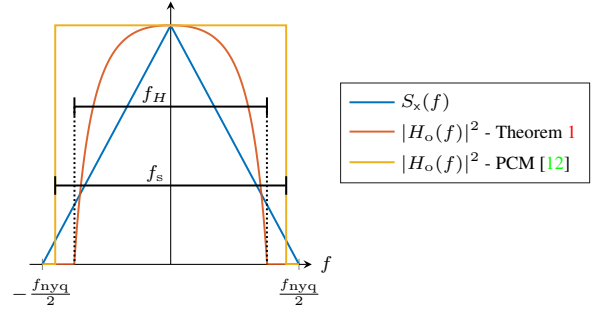
	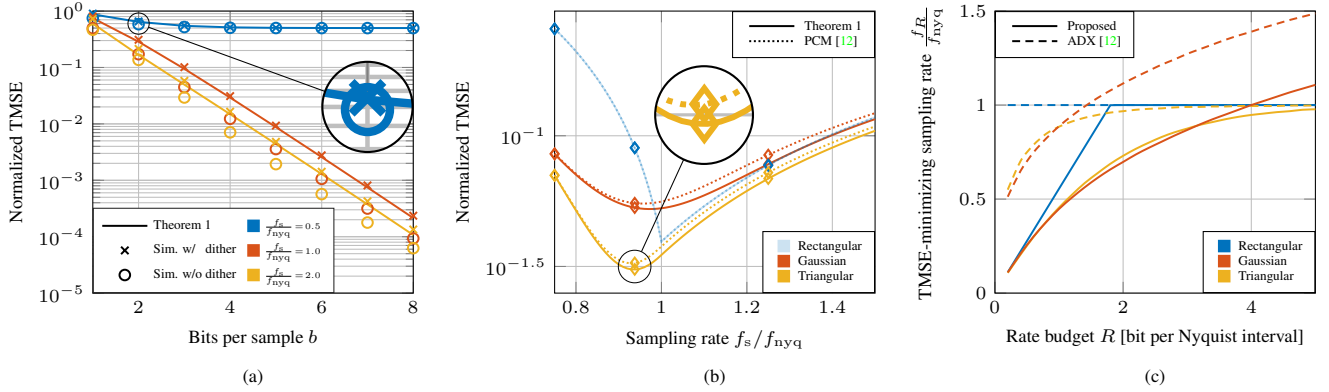
\begin{figure*}
		\centering
		\begin{minipage}[t]{0.33\textwidth}
			\centering
\definecolor{mycolor1}{rgb}{0.0000,0.4470,0.7410}%
\definecolor{mycolor2}{rgb}{0.8500,0.3250,0.0980}%
\definecolor{mycolor3}{rgb}{0.9290,0.6940,0.1250}%
\definecolor{mycolor4}{rgb}{0.4940,0.1840,0.5560}%
\definecolor{mycolor5}{rgb}{0.4660,0.6740,0.1880}%
\definecolor{mycolor6}{rgb}{0.3010,0.7450,0.9330}%
\definecolor{mycolor7}{rgb}{0.6350,0.0780,0.1840}%

\pgfplotsset{
    every axis/.append style={
        extra description/.code={
            \node at (0.5,-0.3) {(a)};
        },
    },
}

\begin{tikzpicture}[font=\scriptsize,spy using outlines=
	{circle, magnification=4.25, connect spies}]
	\begin{axis}[%
	width=2.3in,
    height=2.1in,
	xmin=1,
	xmax=8,
	xlabel={Bits per sample $b$},
	ymode=log,
	ymin=1e-5,
	ymax=1,
	yminorticks=true,
	ylabel={Normalized TMSE},
	xmajorgrids,
	ymajorgrids,
	yminorgrids,
	ylabel near ticks,
    xlabel near ticks,
	legend columns=2,
	legend style={at={(0,0)},anchor=south west,font=\tiny},
	legend style={legend cell align=left, align=left, draw=white!15!black}
	]

		\addlegendimage{black, line width=0.75pt}
		\addlegendentry{Theorem 1}
		\addlegendimage{color=mycolor1, mark options={solid,fill=mycolor1}, only marks, mark=square*, line width=0.75pt}
		\addlegendentry{\scalebox{0.65}{$\frac{f_\mathrm{s}}{f_\mathrm{nyq}} \!=\! 0.5$}}
		\addlegendimage{black,  mark=x, mark options={solid, black}, only marks, line width=0.75pt}
		\addlegendentry{Sim. w/~~~dither~~}
		\addlegendimage{color=mycolor2, mark options={solid,fill=mycolor2}, only marks, mark=square*, line width=0.75pt}
		\addlegendentry{\scalebox{0.65}{$\frac{f_\mathrm{s}}{f_\mathrm{nyq}} \!=\! 1.0$}}
		\addlegendimage{black,  mark=o, mark options={solid, black}, only marks, line width=0.75pt}
		\addlegendentry{Sim. w/o dither~~}
		\addlegendimage{color=mycolor3, mark options={solid,fill=mycolor3}, only marks, mark=square*, line width=0.75pt}
		\addlegendentry{\scalebox{0.65}{$\frac{f_\mathrm{s}}{f_\mathrm{nyq}} \!=\! 2.0$}}

		\addplot [color=mycolor1, line width=0.75pt]
		table[x index=0,y index=1]{./data/eval_model_bit-theo-sim_dithered_fs-div-fNyq_0-5.dat};
		\addplot [color=mycolor1, mark=x, mark options={solid, mycolor1}, only marks, line width=0.75pt]
		table[x index=0,y index=2]{./data/eval_model_bit-theo-sim_dithered_fs-div-fNyq_0-5.dat};
		\addplot [color=mycolor1, mark=o, mark options={solid, mycolor1}, only marks, line width=0.75pt]
		table[x index=0,y index=2]{./data/eval_model_bit-theo-sim_non-dithered_fs-div-fNyq_0-5.dat};
		
		\addplot [color=mycolor2, line width=0.75pt]
		table[x index=0,y index=1]{./data/eval_model_bit-theo-sim_dithered_fs-div-fNyq_1.dat};
		\addplot [color=mycolor2, mark=x, mark options={solid, mycolor2}, only marks, line width=0.75pt]
		table[x index=0,y index=2]{./data/eval_model_bit-theo-sim_dithered_fs-div-fNyq_1.dat};
		\addplot [color=mycolor2, mark=o, mark options={solid, mycolor2}, only marks, line width=0.75pt]
		table[x index=0,y index=2]{./data/eval_model_bit-theo-sim_non-dithered_fs-div-fNyq_1.dat};
		
		\addplot [color=mycolor3, line width=0.75pt]
		table[x index=0,y index=1]{./data/eval_model_bit-theo-sim_dithered_fs-div-fNyq_2.dat};
		\addplot [color=mycolor3, mark=x, mark options={solid, mycolor3}, only marks, line width=0.75pt]
		table[x index=0,y index=2]{./data/eval_model_bit-theo-sim_dithered_fs-div-fNyq_2.dat};
		\addplot [color=mycolor3, mark=o, mark options={solid, mycolor3}, only marks, line width=0.75pt]
		table[x index=0,y index=2]{./data/eval_model_bit-theo-sim_non-dithered_fs-div-fNyq_2.dat};
		
		\coordinate (spypoint) at (axis cs:2,0.6);
        \coordinate (magnifyglass) at (axis cs:7,0.02);
		
	\end{axis}

    \spy [black, size=1.2cm] on (spypoint)
        in node[fill=white] at (magnifyglass);

\end{tikzpicture}%
		\end{minipage}%
		\hspace{0.8mm}%
		\begin{minipage}[t]{0.33\textwidth}
			\centering
\definecolor{mycolor1}{rgb}{0.0000,0.4470,0.7410}%
\definecolor{mycolor3}{rgb}{0.8500,0.3250,0.0980}%
\definecolor{mycolor2}{rgb}{0.9290,0.6940,0.1250}%
\definecolor{mycolor4}{rgb}{0.4940,0.1840,0.5560}%
\definecolor{mycolor5}{rgb}{0.4660,0.6740,0.1880}%
\definecolor{mycolor6}{rgb}{0.3010,0.7450,0.9330}%
\definecolor{mycolor7}{rgb}{0.6350,0.0780,0.1840}%

\pgfplotsset{
    every axis/.append style={
        extra description/.code={
            \node at (0.5,-0.3) {(b)};
        },
    },
}

\begin{tikzpicture}[font=\scriptsize,spy using outlines=
	{circle, magnification=3, connect spies}]
	\begin{axis}[%
	width=2.3in,
    height=2.1in,
	xmin=0.75,
	xmax=1.5,
	xlabel={Sampling rate $f_\mathrm{s}/f_\mathrm{nyq}$},
	ymode=log,
	ymin=0.025,
	ymax=0.3,
	yminorticks=true,
	ylabel={Normalized TMSE},
	xmajorgrids,
	ymajorgrids,
	yminorgrids,
	ylabel near ticks,
    xlabel near ticks,
	legend columns=1,
	legend style={at={(1,1)},anchor=north east,font=\tiny},
	legend style={legend cell align=left, align=left, draw=white!15!black}
	]

		\addplot[color=black, line width=0.75pt,
        y filter/.code={\pgfmathparse{\pgfmathresult-0}\pgfmathresult}]
          table[row sep=crcr]{%
        	-15 -5\\
        };\label{P11}
        \addplot[color=black, densely dotted, line width=0.75pt,
        y filter/.code={\pgfmathparse{\pgfmathresult-0}\pgfmathresult}]
          table[row sep=crcr]{%
        	-15 -5\\
        };\label{P12}
        \addplot[color=black, mark=star, only marks, line width=0.75pt,
        y filter/.code={\pgfmathparse{\pgfmathresult-0}\pgfmathresult}]
          table[row sep=crcr]{%
        	-15 -5\\
        };\label{P13}
        
        \node [draw,fill=white,font=\tiny,anchor=north east] at (axis cs: 1.5,0.3) {
        \setlength{\tabcolsep}{0.6mm}
        \begin{tabular}{c l}
        \ref{P11} & {Theorem~1}\\
        \ref{P12} & {PCM \cite{8416707}}\\
        \end{tabular}
        };
		
		\addplot[color=mycolor1, mark options={solid,fill=mycolor1}, only marks, mark=square*, line width=0.75pt, opacity=0.2,
        y filter/.code={\pgfmathparse{\pgfmathresult-0}\pgfmathresult}]
          table[row sep=crcr]{%
        	-15 -5\\
        };\label{P21}
        
        \addplot[color=mycolor2, mark options={solid,fill=mycolor2}, only marks, mark=square*, line width=0.75pt,
        y filter/.code={\pgfmathparse{\pgfmathresult-0}\pgfmathresult}]
          table[row sep=crcr]{%
        	-15 -5\\
        };\label{P22}
        
        \addplot[color=mycolor3, mark options={solid,fill=mycolor3}, only marks, mark=square*, line width=0.75pt,
        y filter/.code={\pgfmathparse{\pgfmathresult-0}\pgfmathresult}]
          table[row sep=crcr]{%
        	-15 -5\\
        };\label{P23}
        
        \node [draw,fill=white,font=\tiny,anchor=south east] at (axis cs: 1.5,0.025) {
        \setlength{\tabcolsep}{0.6mm}
        \begin{tabular}{l l}
        \ref{P21} & {Rectangular}\\
        \ref{P23} & {Gaussian}\\
        \ref{P22} & {Triangular}\\
        \end{tabular}
        };

		\addplot [color=mycolor3, line width=0.75pt]
		table[x index=0,y index=1]{./data/cmp_kipnis_gauss_fs-div-fNyq_prop_pcm_adx_R3-75.dat};

		\addplot [color=mycolor3, densely dotted, line width=0.75pt]
		table[x index=0,y index=2]{./data/cmp_kipnis_gauss_fs-div-fNyq_prop_pcm_adx_R3-75.dat};
		
		\addplot [color=mycolor3, line width=0.75pt,  mark=diamond, only marks]
		table[x index=0,y index=1]{./data/cmp_kipnis_gauss_fs-div-fNyq_discrete-b_prop_pcm_R3-75.dat};
		\addplot [color=mycolor3, line width=0.75pt, mark=diamond, only marks]
		table[x index=0,y index=2]{./data/cmp_kipnis_gauss_fs-div-fNyq_discrete-b_prop_pcm_R3-75.dat};
        
		\addplot [color=mycolor1, line width=0.75pt,opacity=0.2]
		table[x index=0,y index=1]{./data/cmp_kipnis_rect_fs-div-fNyq_prop_pcm_adx_R3-75.dat};

		\addplot [color=mycolor1, densely dotted, line width=0.75pt,opacity=0.4]
		table[x index=0,y index=2]{./data/cmp_kipnis_rect_fs-div-fNyq_prop_pcm_adx_R3-75.dat};
		
		\addplot [color=mycolor1, line width=0.75pt,  mark=diamond, only marks]
		table[x index=0,y index=1]{./data/cmp_kipnis_rect_fs-div-fNyq_discrete-b_prop_pcm_R3-75.dat};
		\addplot [color=mycolor1, line width=0.75pt, mark=diamond, only marks]
		table[x index=0,y index=2]{./data/cmp_kipnis_rect_fs-div-fNyq_discrete-b_prop_pcm_R3-75.dat};

		\addplot [color=mycolor2, line width=0.75pt]
		table[x index=0,y index=1]{./data/cmp_kipnis_triang_fs-div-fNyq_prop_pcm_adx_R3-75.dat};

		\addplot [color=mycolor2, densely dotted, line width=0.75pt]
		table[x index=0,y index=2]{./data/cmp_kipnis_triang_fs-div-fNyq_prop_pcm_adx_R3-75.dat};
		
		\addplot [color=mycolor2, line width=0.75pt,  mark=diamond, only marks]
		table[x index=0,y index=1]{./data/cmp_kipnis_triang_fs-div-fNyq_discrete-b_prop_pcm_R3-75.dat};
		\addplot [color=mycolor2, line width=0.75pt, mark=diamond, only marks]
		table[x index=0,y index=2]{./data/cmp_kipnis_triang_fs-div-fNyq_discrete-b_prop_pcm_R3-75.dat};
		
		\coordinate (spypoint) at (axis cs:0.9375,0.0316);
        \coordinate (magnifyglass) at (axis cs:1.1,0.12);
		
	\end{axis}
	
    \spy [black, size=1.3cm] on (spypoint)
        in node[fill=white] at (magnifyglass);

\end{tikzpicture}%
		\end{minipage}%
		\hspace{0.8mm}%
		\begin{minipage}[t]{0.33\textwidth}
			\centering
\definecolor{mycolor1}{rgb}{0.0000,0.4470,0.7410}%
\definecolor{mycolor3}{rgb}{0.8500,0.3250,0.0980}%
\definecolor{mycolor2}{rgb}{0.9290,0.6940,0.1250}%
\definecolor{mycolor4}{rgb}{0.4940,0.1840,0.5560}%
\definecolor{mycolor5}{rgb}{0.4660,0.6740,0.1880}%
\definecolor{mycolor6}{rgb}{0.3010,0.7450,0.9330}%
\definecolor{mycolor7}{rgb}{0.6350,0.0780,0.1840}%

\pgfplotsset{
    every axis/.append style={
        extra description/.code={
            \node at (0.5,-0.3) {(c)};
        },
    },
}

\begin{tikzpicture}[font=\scriptsize,spy using outlines=
	{circle, magnification=5, connect spies}]
	\begin{axis}[%
	width=2.3in,
    height=2.1in,
	xmin=0,
	xmax=5,
	xlabel={Rate budget $R$ [bit per Nyquist interval]},
	ymin=0,
	ymax=1.5,
	yminorticks=true,
	ylabel={TMSE-minimizing sampling rate $\frac{f_R}{f_\mathrm{nyq}}$},
	xmajorgrids,
	ymajorgrids,
	yminorgrids,
	ylabel near ticks,
    xlabel near ticks,
	legend columns=1,
	legend style={at={(1,1)},anchor=north east,font=\tiny},
	legend style={legend cell align=left, align=left, draw=white!15!black}
	]
		\addplot[color=black, line width=0.75pt,
        y filter/.code={\pgfmathparse{\pgfmathresult-0}\pgfmathresult}]
          table[row sep=crcr]{%
        	-15 -5\\
        };\label{c11}
        \addplot[color=black, densely dotted, line width=0.75pt,
        y filter/.code={\pgfmathparse{\pgfmathresult-0}\pgfmathresult}]
          table[row sep=crcr]{%
        	-15 -5\\
        };\label{c12}
        \addplot[color=black, densely dashed, line width=0.75pt,
        y filter/.code={\pgfmathparse{\pgfmathresult-0}\pgfmathresult}]
          table[row sep=crcr]{%
        	-15 -5\\
        };\label{c13}
        
        \node [draw,fill=white,font=\tiny,anchor=north west] at (axis cs: 0,1.5) {
        \setlength{\tabcolsep}{0.6mm}
        \begin{tabular}{c l}
        \ref{c11} & {Proposed}\\
        {\ref{c13}} & {ADX \cite{8416707}}\\
        \end{tabular}
        };

		\addplot[color=mycolor1, mark options={solid,fill=mycolor1}, only marks, mark=square*, line width=0.75pt,
        y filter/.code={\pgfmathparse{\pgfmathresult-0}\pgfmathresult}]
          table[row sep=crcr]{%
        	-15 -5\\
        };\label{b1}
        
        \addplot[color=mycolor2, mark options={solid,fill=mycolor2}, only marks, mark=square*, line width=0.75pt,
        y filter/.code={\pgfmathparse{\pgfmathresult-0}\pgfmathresult}]
          table[row sep=crcr]{%
        	-15 -5\\
        };\label{b2}
        
        \addplot[color=mycolor3, mark options={solid,fill=mycolor3}, only marks, mark=square*, line width=0.75pt,
        y filter/.code={\pgfmathparse{\pgfmathresult-0}\pgfmathresult}]
          table[row sep=crcr]{%
        	-15 -5\\
        };\label{b3}
        
        \node [draw,fill=white,font=\tiny,anchor=south east] at (axis cs: 5,0.0) {
        \setlength{\tabcolsep}{0.6mm}
        \begin{tabular}{l l}
        \ref{b1} & {Rectangular}\\
        \ref{b3} & {Gaussian}\\
        \ref{b2} & {Triangular}\\
        \end{tabular}
        };
        
        \addplot [color=mycolor1, line width=0.75pt]
        table[x index=0,y index=1]{./data/eval_fR_over_R_rect_prop_adx_v2.dat};
        \addplot [color=mycolor1, densely dashed, line width=0.75pt]
        table[x index=0,y index=2]{./data/eval_fR_over_R_rect_prop_adx_v2.dat};
		
		\addplot [smooth, color=mycolor2, line width=0.75pt]
		table[x index=0,y index=1]{./data/eval_fR_over_R_triang_prop_adx_v2.dat};
		\addplot [smooth, color=mycolor2, densely dashed, line width=0.75pt]
		table[x index=0,y index=2]{./data/eval_fR_over_R_triang_prop_adx_v2.dat};
		
		\addplot [smooth, color=mycolor3, line width=0.75pt]
		table[x index=0,y index=1]{./data/eval_fR_over_R_gauss_prop_adx_v2.dat};
		\addplot [smooth, color=mycolor3, densely dashed, line width=0.75pt]
		table[x index=0,y index=2]{./data/eval_fR_over_R_gauss_prop_adx_v2.dat};
		
	\end{axis}
	
\end{tikzpicture}%
		\end{minipage}%
		\vspace{-3mm}
		\caption{In (a), we evaluate the \gls{tmse} over an increasing number of bits per sample $b$ for different sampling rates $f_\mathrm{s}$.
			We also compare the \gls{tmse} predicted by \cref{theo:optimal_pre-sampling_filter} to the simulated \gls{tmse} when employing non-subtractive dithered and conventional, \iec non-dithered, quantizers.
			In (b), we compare the achievable \gls{tmse} when employing our proposed pre-sampling filter to the performance when employing the one proposed in \cite{8416707}, where we modify the result from \cite{8416707} by using the quantization distortion resulting from our system model to enable a fair comparison.
			Diamonds illustrate the sampling rates corresponding to integer values of $b$, \iec $b=R/f_\mathrm{s} \in \mathbb{N}$.
			In (c), we evaluate the sampling rates $f_R$, which minimize the \gls{tmse} for a fixed rate budget $R=f_\mathrm{s} \cdot b$.
			Note that we set the 3-dB bandwidth of the Gaussian \gls{psd} to $\frac{f_\mathrm{nyq}}{2}$ in this work.\looseness-1\vspace{-3mm}}
		\label{fig:num_res} 
	\end{figure*}
	
	Together, \cref{prop:opt_recovery_filter} and \cref{theo:optimal_pre-sampling_filter} characterize the \gls{tmse} minimizing acquisition system, which utilizes a uniform scalar \gls{adc}.
	The analysis presented in this work is different from that provided in the \gls{pcm} setting of \cite[Sec.~V]{8416707}, which is carried out \emph{assuming} that the quantization distortion can be modeled as an additive zero-mean distortion, which is white, i.e., temporally uncorrelated, and uncorrelated with the quantizer input.
	In contrast, we can \emph{guarantee} the same properties for any non-overloading quantizer input distribution by employing non-subtractive dithered quantization, which, however, complicates the derivation.
	
	While our derived recovery filter $G_\mathrm{o}(f)$ and the corresponding \gls{tmse} expression from \cref{prop:opt_recovery_filter} are similar to those obtained in \cite[Prop.~5]{8416707}, i.e., both are Wiener filters and differences are due to different quantization distortion models, the proposed pre-sampling filters $H_\mathrm{o}(f)$ also differ.
	In particular, for symmetric input \glspl{psd} $S_\mathsf{x}(f)$ which are non-increasing for $f>0$, \iec for unimodal \glspl{psd} $S_\mathsf{x}(f)$, the authors of \cite{8416707} assume that \eqref{eq:mse_prop_1} is minimized by choosing $H(f)$ as a low-pass filter with cut-off frequency $\frac{f_\mathrm{s}}{2}$ \cite[Sec.~V.B]{8416707}.
	This means that $H(f)$ corresponds to a conventional anti-aliasing filter.
	In contrast, \cref{theo:optimal_pre-sampling_filter} proves that the \gls{tmse}-minimizing pre-sampling filter, \iec $H_\mathrm{o}(f)$, may discard weak frequency components of $S_\mathsf{x}(f)$, which cannot be resolved with the given quantizer resolution.
	For example, again considering unimodal \glspl{psd} $S_\mathsf{x}(f)$, the frequency support of $H_\mathrm{o}(f)$ may be given by $\mathds{1}_{|f| < \frac{f_H}{2}}(f)$ with $f_H < f_\mathrm{s}$, as illustrated in \cref{fig:illu_cmp_H2_opt}.
	This allows to resolve the remaining frequency components with higher accuracy, because the quantization distortion is proportional to the variance of the samples at the quantizer input, \iec $\mathbb{E}\{\mathsf{e}^2[n]\} \propto \mathbb{E}\{\mathsf{y}^2[n]\}$.
	Consequently, the minimal \gls{tmse} is achieved by a trade-off between the distortion due to the pre-sampling filter and the quantization distortion.
	Note that a similar result has been obtained in \cite{644563}, where the authors only consider quantization.
	In \cref{sec:numerical_results} we verify our claim and show that the pre-sampling filter proposed in \cref{theo:optimal_pre-sampling_filter} can outperform the one considered in \cite{8416707}.\looseness-1
	
	While the pre-sampling filter proposed here is generally different from the one employed in \cite{8416707}, they are identical up to a scaling factor for a rectangular input \gls{psd} $S_\mathsf{x}(f)$, because in this case, there are no weak frequency components to be discarded.
	For such rectangular input \glspl{psd}, one can specialize \eqref{eq:theorem_def_TMSE_opt} to a closed-form expression.
	In the following, we compare this expression to the normalized \gls{tmse} of the solutions from \cite{8416707}, which yields the normalized \gls{tmse}
	\begin{align}
		& \frac{\mathrm{TMSE}_\mathrm{o}(f_\mathrm{s},b)}{\E \left \lbrace \left \vert \mathsf{x}(t) \right \vert^2 \right \rbrace} = 1 - \frac{\min(f_\mathrm{s},f_\mathrm{nyq})}{f_\mathrm{nyq}} \nonumber \\
		& \qquad \times \begin{cases}
			\left(1+T_\mathrm{s} \! \min(f_\mathrm{s},f_\mathrm{nyq}) \, \bar{\kappa} 2^{-2b}\right)^{-1}, & \text{Proposed}\\
			\left(1 + T_\mathrm{s} \! \min(f_\mathrm{s},f_\mathrm{nyq}) \, c_\mathrm{q} 2^{-2b}\right)^{-1}, & \text{PCM~\cite{8416707}}\\
			(1-2^{-2b}), & \text{ADX~\cite{8416707}}
		\end{cases},
		\label{eq:ntmse_rect}
	\end{align}
	with $\bar{\kappa} = \eta^2 ( 1 - \frac{2 \, \eta^2}{3 \, 2^{2b}} )^{-1}$.
	In \eqref{eq:ntmse_rect}, it holds $c_\mathrm{q} = \frac{\sqrt{3}\pi}{2}$ for non-uniform scalar quantization of Gaussian inputs \cite[p.~2329]{gray1998quantization}, as considered in \cite{8416707}.
	Furthermore, the factor `$\min(f_\mathrm{s},f_\mathrm{nyq}) f_\mathrm{nyq}^{-1}$' corresponds to the loss due to sub-Nyquist sampling, `$\mathcal{X} \cdot 2^{−2b}$', $\mathcal{X} \in \{1,\bar{\kappa},c_\mathrm{q}\}$ stems from the loss due to quantization, and the factor `$T_\mathrm{s} \! \min(f_\mathrm{s},f_\mathrm{nyq})$' shows a reduction of the quantization distortion when employing temporal oversampling.
	It can be shown that the \gls{tmse} for \gls{pcm} in \eqref{eq:ntmse_rect} is lower than that of Theorem~\ref{theo:optimal_pre-sampling_filter}, while both are outperformed by \gls{adx}, which corresponds to a lower bound on the minimum achievable distortion for any practical system.
	However, the \gls{pcm} \gls{tmse} expression assumes non-uniform quantization and is only valid for Gaussian input distributions, while the proposed \gls{tmse} is derived for uniform quantization and holds for arbitrary input distributions when employing non-subtractive dithered quantization.\looseness-1
	
	The analysis in this work is limited to single branch sampling.
	In \cite[Sec.~2.5]{kipnis2017fundamental}, it was shown that single branch sampling is sufficient for unimodal input \glspl{psd}.
	However, for multi-modal input \glspl{psd}, the performance can generally be improved by employing multi-branch sampling, which is beyond the scope of this work.
	
	\section{Numerical Results}\label{sec:numerical_results}
	In this section, we provide a numerical study focusing on Gaussian processes $\mathsf{x}(t)$.
	For the considered equivalent quantizer model given in \eqref{eq:def_z_n}, it has been shown in \cite[Sec.~IV.B]{neuhaus2021task} that the overload probability needs to decrease with increasing $b$ to ensure its validity.
	Hence, in the following, we increase $\eta$ with $b$, i.e., we set $\eta(b) = 0.25b + 1.75$, as proposed in \cite[Sec.~IV.B]{neuhaus2021task}.
	
	First, we validate our theoretical results in \cref{fig:num_res}.(a), asserting that the \gls{tmse} derived in Theorem~\ref{theo:optimal_pre-sampling_filter}, where non-overloaded \glspl{adc} are assumed, is indeed achievable by the proposed system with and without dithering.
	Here, we consider a rectangular input \gls{psd} $S_\mathsf{x}(f)$.
	It can be seen that the \gls{tmse} predicted by \eqref{eq:theorem_def_TMSE_opt} is a close match to the simulated \gls{tmse}, when employing dithered quantizers, as we assume in our system model.
	The mismatch is likely due to a small but non-zero overload probability.
	Furthermore, we observe that a \emph{lower} \gls{tmse} can be achieved when using the same filters but employing conventional  non-dithered quantizers.
	This asserts the validity of our derivations.\looseness-1
	
	Next, in \cref{fig:num_res}.(b), we compare the resulting \gls{tmse} when employing the pre-sampling filter proposed in \cref{theo:optimal_pre-sampling_filter} to the one derived in \cite[Sec.~V.B]{8416707}, while using the quantization noise model derived in this work, e.g., we set $c_\mathrm{q}=\bar{\kappa}$ in \eqref{eq:ntmse_rect}.
	Here, we fix the rate budget to $R= f_\mathrm{s} \cdot b = \SI{3.75}{}$ bit per Nyquist interval.
	It can be seen that the proposed solution achieves a marginally lower \gls{tmse} for non-flat input \glspl{psd}.
	This shows the superiority of the proposed pre-sampling filter, which may discard weak frequency components.
	Furthermore, the evaluation shows that the \gls{tmse} is minimized for sampling rates $f_\mathrm{s} < f_\mathrm{nyq}$, for non-flat input \glspl{psd}, which is known from \gls{adx} \cite{8350400,8416707}.\looseness-1
	
	Finally, in \cref{fig:num_res}.(c), we compare the sampling rates, denoted as $f_R$, which minimize the \gls{tmse} for a fixed rate budget $R$ in the proposed system and for the fundamental \gls{adx} limit \cite[eq.~(26)]{8416707}.
	Note that the values of $f_R$ are searched numerically.
	It can be seen that for low rate budgets $R$, the \gls{tmse} is minimized by employing sub-Nyquist sampling.
	Notably, $f_\mathrm{R}$ is much lower in the proposed system compared to the fundamental \gls{adx} limit.
	This demonstrates the effectiveness of sub-Nyquist sampling for practical systems operating under tight rate budgets and employing uniform \glspl{adc}.\looseness-1
	
	\section{Conclusion}\label{sec:conclusion}
	In this work, we studied an acquisition system for \gls{wss} signals employing uniform sampling and uniform quantization.
	For a fixed sampling rate and quantizer resolution, we obtained closed-form expressions for the \gls{tmse}-minimizing pre-sampling and recovery filters, as well as the resulting minimum achievable \gls{tmse}.
	We showed that the proposed solution for the pre-sampling filter is superior to a previously proposed solution for the \gls{pcm} setting from \cite{8416707}.
	Furthermore, our numerical results demonstrated the validity of our model and, most notably, that the \gls{tmse} is often minimized by employing considerable sub-Nyquist sampling for low rate budgets.\looseness-1

	\begin{appendices}
		\renewcommand{\theequation}{A\arabic{equation}}    
		\setcounter{equation}{0}
		
		\section{Proof of Proposition 1}\label{sec:appendix_A}
		\begin{proof}
			Before providing the proof of \cref{prop:opt_recovery_filter}, we introduce the following lemma.\looseness-1
			\begin{lemma}\label{lem:x_e_uncorr}
				The input process $\mathsf{x}(t)$ is uncorrelated to the quantization error $\mathsf{e}[n]$, \iec
				\begin{equation}
					\E\left\lbrace \mathsf{x}(t) \mathsf{e}[n] \right\rbrace = 0.
				\end{equation}
			\end{lemma}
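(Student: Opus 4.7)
The plan is to reduce $\mathbb{E}\{\mathsf{x}(t)\mathsf{e}[n]\}$ to an expectation that involves only the conditional mean of the quantization error given the sampled input, and then invoke the structural property of non-subtractive dithered quantization with triangular dither (of width $2\Delta$) that this conditional mean is a constant independent of the quantizer input. By the tower property,
\begin{equation*}
\mathbb{E}\{\mathsf{x}(t)\mathsf{e}[n]\} = \mathbb{E}\bigl\{\mathsf{x}(t)\,\mathbb{E}\{\mathsf{e}[n]\mid \mathsf{x}(\cdot)\}\bigr\}.
\end{equation*}
Since $\mathsf{e}[n]=q_b(\mathsf{y}[n]+\mathsf{w}[n])-\mathsf{y}[n]$ depends on $\mathsf{x}(\cdot)$ only through $\mathsf{y}[n]=(\mathsf{x}*h)(nT_\mathrm{s})$, and since $\mathsf{w}[n]$ is independent of the entire process $\mathsf{x}(\cdot)$, conditioning on $\mathsf{x}(\cdot)$ collapses to conditioning on $\mathsf{y}[n]$.

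Next I would invoke the key property used already in Section~II: for non-overloaded non-subtractive dithered quantization with the triangular dither of width $2\Delta$, the first two conditional moments of $\mathsf{e}[n]$ given $\mathsf{y}[n]$ are independent of $\mathsf{y}[n]$ (this is the $p=2$ case of Wannamaker et al.~\cite{wannamaker2000nonsubtractiveDither}, also used to derive the white autocorrelation $R_\mathsf{e}[l]=\tfrac{\Delta^2}{4}\delta[l]$). In particular $\mathbb{E}\{\mathsf{e}[n]\mid \mathsf{y}[n]\}$ is a deterministic constant $c$, and taking unconditional expectations gives $c=\mathbb{E}\{\mathsf{e}[n]\}=0$, where the last equality follows from symmetry of the mid-rise quantizer together with the even, symmetric dither density.

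Combining these two observations yields $\mathbb{E}\{\mathsf{e}[n]\mid \mathsf{x}(\cdot)\}=\mathbb{E}\{\mathsf{e}[n]\mid \mathsf{y}[n]\}=0$ almost surely, and substituting this back gives $\mathbb{E}\{\mathsf{x}(t)\mathsf{e}[n]\}=\mathbb{E}\{\mathsf{x}(t)\cdot 0\}=0$, establishing the lemma. The only subtle step is the invocation of the conditional-moment property of dithered quantization; this is where the triangular dither density and the non-overload assumption are essential, and I expect this to be the step that requires the most careful justification (likely by referring to \cite[Sec.~III.C]{wannamaker2000nonsubtractiveDither} and \cite[Th.~2]{gray1993ditheredQuantizers} as already done in Section~II).
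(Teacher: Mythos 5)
Your proof is correct and follows essentially the same route as the paper: the law of total expectation, the observation that $\mathsf{e}[n]$ depends on $\mathsf{x}(\cdot)$ only through $\mathsf{y}[n]$ (the paper phrases this as the Markov chain $\mathsf{x}(t)\rightarrow\mathsf{y}[n]\rightarrow\mathsf{e}[n]$), and the vanishing of $\E\{\mathsf{e}[n]\mid\mathsf{y}[n]\}$ from \cite[Th.~2]{gray1993ditheredQuantizers}. Your extra step of deducing that the input-independent conditional mean equals zero via the odd symmetry of the mid-rise quantizer and the even dither density is a small, valid refinement of what the paper simply cites.
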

			\begin{proof}
				First, employing the law of total expectation, we obtain
				\begin{equation}
					\E\left\lbrace \mathsf{x}(t) \mathsf{e}[n] \right\rbrace =  \E_{\mathsf{x}}\!\left\lbrace \mathsf{x}(t) \, \E_{\mathsf{e}|\mathsf{x}}\!\left\lbrace  \mathsf{e}[n] \big\vert \mathsf{x}(t) \right\rbrace \right\rbrace.
					\label{eq:lemma1_eq1}
				\end{equation}
				Then, noting that $\mathsf{x}(t) \rightarrow \mathsf{y}[n] \rightarrow \mathsf{e}[n]$ forms a Markov chain \cite[p.~34]{cover1999elements}, where the relationship between $\mathsf{x}(t)$ and $\mathsf{y}[n]$ is deterministic and given by \eqref{eq:def_y_n}, it is sufficient to show that
				$
				\E_{\mathsf{e}|\mathsf{y}}\!\left\lbrace  \mathsf{e}[n] \big\vert \mathsf{y}[n] \right\rbrace = 0,
				\label{eq:lemma1_eq2}
				$
				which holds due to \cite[Th.~2]{gray1993ditheredQuantizers}, hence, concluding the proof.\looseness-1
			\end{proof}
			
			Next we prove \cref{prop:opt_recovery_filter}.
			Note that the following proof is similar to the one provided in \cite[Appendix E]{6484980}.
			In order to minimize \eqref{eq:main_objective}, it is sufficient to minimize $\E \left \lbrace \left \vert \hat{\mathsf{x}}(t) - \mathsf{x}(t) \right \vert^2 \right \rbrace$ for any $t \in \mathbb{R}$, which is feasible here, as we will show in the following.
			From the orthogonality principle it follows \cite[eq.~(7-92)]{papoulis2001probability}
			\begin{IEEEeqnarray}{llCl}
				\label{eq:orth_G_1}
				& \E\left\lbrace \hat{\mathsf{x}}(t) \mathsf{z}[n] \right\rbrace & = &  \E\left\lbrace \mathsf{x}(t) \mathsf{z}[n] \right\rbrace \\
				\Rightarrow \quad & \sum_{k' \in \mathbb{Z}} g(t-k' T_\mathrm{s}) R_{\mathsf{z}}[k'-n] & = &  R_{\mathsf{x} \mathsf{y}}(t-n T_\mathrm{s}),
				\label{eq:orth_G_2} \\
				\Leftrightarrow \quad & \sum_{k \in \mathbb{Z}} g(t - k T_\mathrm{s}) R_{\mathsf{z}}[k] & = & R_{\mathsf{x} \mathsf{y}}(t).
				\label{eq:orth_final}
			\end{IEEEeqnarray}
			with $R_{\mathsf{z}}[l] = \E\left\lbrace \mathsf{z}[n+l] \mathsf{z}[n]\right\rbrace$ and $R_{\mathsf{x} \mathsf{y}}(\tau) = \E\left\lbrace \mathsf{x}(t+\tau) \mathsf{y}(t)\right\rbrace$.
			Above, \eqref{eq:orth_G_2} is obtained using \eqref{eq:def_xHat} and \cref{lem:x_e_uncorr} and \eqref{eq:orth_final} follows by replacing $t$ and $k'-n$ with $t + n T_\mathrm{s}$ and $k$, respectively.
			Noting that the \gls{lhs} of \eqref{eq:orth_final} is equivalent to a convolution of $g(t)$ with $\sum_{k \in \mathbb{Z}} \delta(t-kT_\mathrm{s}) \cdot R_{\mathsf{z}}[k]$, we solve \eqref{eq:orth_final} for $G(f)$ in the Fourier domain, which yields (cf.~\cite[Prop.~3.1]{eldar2015sampling})
			\begin{equation}
				G(f) = S_{\mathsf{x} \mathsf{y}}(f) S^{-1}_{\mathsf{z}}(e^{j 2 \pi f T_\mathrm{s}}),
				\label{eq:G_opt_derived}
			\end{equation}
			with
			\begin{IEEEeqnarray}{lCl}
				\label{eq:def_S_xy}
				S_{\mathsf{x} \mathsf{y}}(f) & = & \FT{R_{\mathsf{x} \mathsf{y}}(t)} = S_{\mathsf{x}}(f) \, H^*(f)\\
				S_{\mathsf{z}}(e^{j 2 \pi f T_\mathrm{s}})  & = & \DTFT{R_{\mathsf{z}}[l]} \nonumber \\
				\label{eq:def_S_z}
				& = & \frac{1}{T_\mathrm{s}} \sum_{k \in \mathbb{Z}} \vert H(f-k f_\mathrm{s}) \vert^2 S_\mathsf{x}(f-k f_\mathrm{s}) + \frac{\Delta^2}{4}. \quad
			\end{IEEEeqnarray}
			Using $\E\{\mathsf{w}^2[n]\} = \frac{\Delta^2}{6}$, $\Delta = \frac{2 \gamma}{2^b}$, and \eqref{eq:def_dynamic_range_squared}, it follows that
			\begin{equation}
				\gamma^2 = \eta^2 \left( \E\left\lbrace \mathsf{y}^2[n] \right\rbrace + \frac{2 \, \gamma^2}{3 \, 2^{2b}}  \right) \Rightarrow \gamma^2 = \bar{\kappa} \, \E\left\lbrace \mathsf{y}^2[n] \right\rbrace,
				\label{eq:gamma2_simplified}
			\end{equation}
			with $\bar{\kappa} = \eta^2 ( 1 - \frac{2 \, \eta^2}{3 \, 2^{2b}} )^{-1}$.
			Again using $\Delta = \frac{2 \gamma}{2^b}$ and utilizing \eqref{eq:gamma2_simplified}, we obtain
			\begin{equation}
				\frac{\Delta^2}{4} = \kappa \int_{-\frac{f_\mathrm{s}}{2}}^{\frac{f_\mathrm{s}}{2}} \sum_{m \in \mathbb{Z}} \vert H(f-m f_\mathrm{s}) \vert^2 S_\mathsf{x}(f-m f_\mathrm{s}) \mathrm{d}f,
				\label{eq:eff_quant_dist}
			\end{equation}
			where $\kappa = \frac{\bar{\kappa}}{2^{2b}}$.
			Finally, \eqref{eq:def_G_opt} is obtained by inserting \eqref{eq:eff_quant_dist} into \eqref{eq:def_S_z} and inserting the result as well as \eqref{eq:def_S_xy} into \eqref{eq:G_opt_derived}.
			
			In order to evaluate the \gls{tmse}, we first evaluate the \emph{instantaneous} \gls{mse}, \iec the \gls{mse} at some time $t \in \mathbb{R}$, which we define as
			\begin{IEEEeqnarray}{lCl}
				\E \left \lbrace \left \vert \hat{\mathsf{x}}(t) - \mathsf{x}(t) \right \vert^2 \right \rbrace & = & \E \left \lbrace \left \vert \mathsf{x}(t) \right \vert^2 \right \rbrace - \E \left \lbrace \hat{\mathsf{x}}(t) \mathsf{x}(t) \right \rbrace \nonumber \\
				& \overset{\mathrm{(a)}}{=} & R_{\mathsf{x}}(0) - \sum_{n \in \mathbb{Z}} g(t - n T_\mathrm{s}) R_{\mathsf{x}\mathsf{y}}(t - n T_\mathrm{s} ) \nonumber \\
				& \overset{\mathrm{(b)}}{=} & R_{\mathsf{x}}(0) - \left( ( g \cdot R_{\mathsf{x}\mathsf{y}} ) * \Sha_{T_\mathrm{s}} \right) (t),
				\label{eq:mse_inst}
			\end{IEEEeqnarray}
			where (a) is due to $R_{\mathsf{x}}(\tau) = \E\left\lbrace \mathsf{x}(t+\tau) \mathsf{x}(t)\right\rbrace$, \eqref{eq:def_z_n}, \eqref{eq:def_xHat}, and \cref{lem:x_e_uncorr}, and (b) is due to writing the sum as a convolution of $g(t) \cdot R_{\mathsf{xy}}(t)$ with $\Sha_{T_\mathrm{s}} = \sum_{n \in \mathbb{Z}} \delta(t - n T_\mathrm{s}) $.
			Employing the inverse \gls{ft}, the second term in \eqref{eq:mse_inst} can be written as
			\begin{IEEEeqnarray}{lCl}
				\left( ( g \cdot R_{\mathsf{x}\mathsf{y}} ) * \Sha_{T_\mathrm{s}} \right) (t) & \overset{\mathrm{(a)}}{=} & \frac{1}{T_\mathrm{s}} \int_{\mathbb{R}} \left( G * S_{\mathsf{x} \mathsf{y}} \right)(f) \cdot \Sha_{f_\mathrm{s}}(f) \, e^{j 2 \pi f t} \mathrm{d}f \nonumber \\
				& \overset{\mathrm{(b)}}{=} & \frac{1}{T_\mathrm{s}} \sum_{n \in \mathbb{Z}} \left( G * S_{\mathsf{x} \mathsf{y}} \right)(n f_\mathrm{s}) \, e^{j 2 \pi n f_\mathrm{s} t}
				\label{eq:help_inst_mse}
			\end{IEEEeqnarray}
			where (a) is due to $\mathrm{FT}\left\lbrace \Sha_{T_\mathrm{s}}(t) \right\rbrace = \frac{1}{T_\mathrm{s}}\Sha_{f_\mathrm{s}}(f)$ and (b) is due to inserting the definition of $\Sha_{f_\mathrm{s}}(f)$ and exchanging the integral and sum operations. Then, averaging \eqref{eq:help_inst_mse} over one sampling period, \iec averaging over $t \in [0, T_\mathrm{s})$, yields
			\begin{IEEEeqnarray}{lCl}
				\frac{1}{T_\mathrm{s}} \int_0^{T_\mathrm{s}} \left( ( g \cdot R_{\mathsf{x}\mathsf{y}} ) * \Sha_{T_\mathrm{s}} \right) (t) \mathrm{d} t & \overset{\mathrm{(a)}}{=} & \frac{1}{T_\mathrm{s}} \left( G * S_{\mathsf{x} \mathsf{y}} \right)(0) \nonumber\\
				& \overset{\mathrm{(b)}}{=} & \frac{1}{T_\mathrm{s}} \int_{\mathbb{R}} G(f) S_{\mathsf{y} \mathsf{x}} (f) \mathrm{d}f, \qquad
				\label{eq:partial_inst_mse_time_avg}
			\end{IEEEeqnarray}
			where (a) is due to inserting \eqref{eq:help_inst_mse} and solving the integral and (b) is due to $ S_{\mathsf{x} \mathsf{y}} (-f) = \mathrm{FT}\left\lbrace R_{\mathsf{x} \mathsf{y}} (-t) \right\rbrace = \mathrm{FT}\left\lbrace R_{\mathsf{y} \mathsf{x}} (t) \right\rbrace = S_{\mathsf{y} \mathsf{x}} (f)$.
			Finally, substituting \eqref{eq:help_inst_mse} into \eqref{eq:mse_inst} and inserting the result into the objective of \eqref{eq:main_objective} and utilizing \eqref{eq:G_opt_derived}, \eqref{eq:def_S_xy}, and \eqref{eq:partial_inst_mse_time_avg} proves \eqref{eq:mse_prop_1}.
		\end{proof}
		
		\section{Proof of Theorem 1}\label{sec:appendix_B}
		\begin{proof}
			First we note that the \gls{tmse} expression from \cref{prop:opt_recovery_filter}, given in \eqref{eq:mse_prop_1}, is minimized by maximizing the second term.
			Hence, we obtain the following constrained maximization problem:
			\begin{subequations}
				\label{eq:mse_obj_full_constrained}
				\begin{IEEEeqnarray}{ll}
					\label{eq:mse_obj_full_constrained_obj}
					\max_{\bar{H}(f) \geq 0} \quad & \int_{-\frac{f_\mathrm{s}}{2}}^{\frac{f_\mathrm{s}}{2}} \frac{ \sum_{k \in \mathbb{Z}} \bar{H}(f - k f_\mathrm{s}) S_{\mathsf{x}}(f - k f_\mathrm{s}) }{ \sum_{k' \in \mathbb{Z}} \bar{H}(f - k' f_\mathrm{s}) + \frac{1}{2^{2b}} } \mathrm{d}f \\
					\label{eq:mse_obj_full_constrained_const}
					\mathrm{~~s.t.} \quad & \bar{\kappa} T_\mathrm{s} \int_{-\frac{f_\mathrm{s}}{2}}^{\frac{f_\mathrm{s}}{2}} \sum_{k \in \mathbb{Z}} \bar{H}(f' - k f_\mathrm{s}) \mathrm{d}f' = 1,
				\end{IEEEeqnarray}
			\end{subequations}
			with
			$\bar{H}(f) = \left\vert H(f) \right\vert^2 S_{\mathsf{x}}(f)$
			and where the objective, given in \eqref{eq:mse_obj_full_constrained_obj}, is obtained by writing the integral as a sum over integrals of support $f_\mathrm{s}$ and subsequently exchanging the sum and integral operations.
			\begin{lemma}
				\label{lem:at_most_one_aliased_component}
				The objective \eqref{eq:mse_obj_full_constrained_obj} is maximized by setting for each $|\tilde{f}_0| < \frac{f_\mathrm{s}}{2}$ at most a single value of $\{ \bar{H}(\tilde{f}_0 + k f_\mathrm{s}) \}_\mathrm{k \in \mathbb{N}}$ to be non-zero and this value corresponds to the aliased frequency, \iec to the $k \in \mathbb{N}$, with the maximal value of $\{ S_{\mathsf{x}}(\tilde{f}_0 + k f_\mathrm{s}) \}_\mathrm{k \in \mathbb{N}}$.
			\end{lemma}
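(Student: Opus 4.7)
The plan is to argue pointwise in $\tilde f_0 \in (-f_{\mathrm{s}}/2, f_{\mathrm{s}}/2)$, exploiting the fact that the integrand in \eqref{eq:mse_obj_full_constrained_obj} and the integrand of the constraint \eqref{eq:mse_obj_full_constrained_const} depend on $\bar H$ evaluated at the aliased frequencies $\{\tilde f_0 + k f_{\mathrm{s}}\}_{k\in\mathbb{Z}}$ only through the nonnegative sequences $a_k(\tilde f_0) = \bar H(\tilde f_0 + k f_{\mathrm{s}})$ and $s_k(\tilde f_0) = S_{\mathsf{x}}(\tilde f_0 + k f_{\mathrm{s}})$. In these terms, the integrand of \eqref{eq:mse_obj_full_constrained_obj} at $\tilde f_0$ reads $\frac{\sum_k a_k s_k}{\sum_{k'} a_{k'} + 2^{-2b}}$, while the constraint integrand is simply $\bar\kappa T_{\mathrm{s}} \sum_k a_k$.

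First, I would observe that the constraint and the denominator of the objective depend on the sequence $\{a_k(\tilde f_0)\}$ only through its sum $A(\tilde f_0) = \sum_k a_k(\tilde f_0)$. Hence, holding $A(\tilde f_0)$ fixed pointwise (which automatically preserves the constraint \eqref{eq:mse_obj_full_constrained_const}), maximizing the objective reduces to maximizing the numerator $\sum_k a_k s_k$ at each $\tilde f_0$. Since $a_k \geq 0$ and $\sum_k a_k = A$, we have the trivial bound $\sum_k a_k s_k \le A \cdot \max_k s_k$, with equality if and only if the entire mass $A$ is concentrated on an index $\tilde k(\tilde f_0) \in \argmax_{k} s_k(\tilde f_0)$.

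Therefore, given any feasible $\bar H(\cdot)$, define a new filter $\bar H'(\cdot)$ by setting, for each $\tilde f_0$, $\bar H'(\tilde f_0 + \tilde k(\tilde f_0) f_{\mathrm{s}}) = A(\tilde f_0)$ and $\bar H'(\tilde f_0 + k f_{\mathrm{s}}) = 0$ for all $k \ne \tilde k(\tilde f_0)$. Then $\bar H'$ satisfies \eqref{eq:mse_obj_full_constrained_const} since $A(\tilde f_0)$ is unchanged, and it weakly increases the integrand of \eqref{eq:mse_obj_full_constrained_obj} pointwise. This shows that every feasible $\bar H$ can be replaced, without decreasing the objective and without violating the constraint, by one supported (for each $\tilde f_0$) on the single aliased frequency that maximizes $S_{\mathsf{x}}$, yielding the claim with $\tilde k(f) = \argmax_{k\in\mathbb{Z}} S_{\mathsf{x}}(f - k f_{\mathrm{s}})$.

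The only genuinely delicate point is measurability of $\tilde k(\cdot)$ and the possibility that $\argmax_k s_k(\tilde f_0)$ is not unique on a set of positive measure; in the non-unique case any measurable selection works, since the bound $A \cdot \max_k s_k$ is attained by concentrating mass on any of the maximizers, so the lemma's statement is unaffected. This step is mild and does not obstruct the argument; the substantive content is the pointwise rearrangement inequality above, which then reduces \eqref{eq:mse_obj_full_constrained} to a one-dimensional water-filling problem in $A(\tilde f_0)$ that is solved in the remainder of Appendix~\ref{sec:appendix_B} to yield \eqref{eq:theorem_def_H_tilde_opt} and \eqref{eq:theorem_def_TMSE_opt}.
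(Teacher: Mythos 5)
Your proposal is correct and rests on the same key observation as the paper's proof: the constraint and the denominator of the integrand depend on $\{\bar H(\tilde f_0+kf_\mathrm{s})\}_k$ only through the sum $\sum_k \bar H(\tilde f_0+kf_\mathrm{s})$, so mass can be moved onto the aliased frequency maximizing $S_{\mathsf{x}}$ without affecting feasibility while weakly increasing the numerator. The paper phrases this as a pairwise exchange between two nonzero components, whereas you concentrate all mass in one step via the bound $\sum_k a_k s_k \le A\max_k s_k$; this is a cosmetic rather than substantive difference.
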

			\begin{proof}
				To prove the lemma, we show that each filter, which does not satisfy the conditions of the lemma can be improved upon by an alternative filter, which still satisfies the constraints but yields a higher objective value.
				To this aim, assume there exists a filter $\{ \bar{H}^\mathrm{c}(f) \}_{f \in \mathbb{R}}$ which satisfies the constraints $\bar{H}(f) \geq 0$ and \eqref{eq:mse_obj_full_constrained_const}.
				Furthermore, assume for some $f_0 \in \mathbb{R}$ this filter satisfies $\bar{H}^\mathrm{c}(f_0) > 0$ and $\bar{H}^\mathrm{c}(f_0 + k f_\mathrm{s}) > 0$ for some $k \in \mathbb{N}$, while it holds $S_{\mathsf{x}}(f_0) \geq S_{\mathsf{x}}(f_0 + k f_\mathrm{s})$.\looseness-1
				
				Now, we construct an alternative filter $\{ \bar{H}^\mathrm{a}(f) \}_{f \in \mathbb{R}}$ such that\looseness-1
				\begin{equation}
					\label{eq:def_H_alt}
					\bar{H}^\mathrm{a}(f) = \begin{cases}
						\bar{H}^\mathrm{c}(f_0) + \bar{H}^\mathrm{c}(f_0 + k f_\mathrm{s}), & \text{for~} f = f_\mathrm{0}\\
						0, & \text{for~} f = f_0 + k f_\mathrm{s}\\
						\bar{H}^\mathrm{c}(f), & \text{otherwise.}
					\end{cases}
				\end{equation}
				We note that $\sum_{k \in \mathbb{N}} \bar{H}^\mathrm{a}(f - k f_\mathrm{s}) = \sum_{k \in \mathbb{N}} \bar{H}^\mathrm{c}(f - k f_\mathrm{s})$.
				Hence, it can be verified that also $\bar{H}^\mathrm{a}(f)$, defined in \eqref{eq:def_H_alt}, satisfies $\bar{H}(f) \geq 0$ and \eqref{eq:mse_obj_full_constrained_const}.
				Then, it can be shown that the integrand of the objective \eqref{eq:mse_obj_full_constrained_obj} at the frequency $\tilde{f}_0 \in \left(-\frac{f_\mathrm{s}}{2},\frac{f_\mathrm{s}}{2}\right)$ to which $f_0$ is aliased satisfies
				\begin{equation*}
					\frac{ \sum\limits_{k \in \mathbb{Z}}\!\bar{H}^\mathrm{a}(\tilde{f}_0 - k f_\mathrm{s})S_{\mathsf{x}}(\tilde{f}_0 - k f_\mathrm{s})\!\! }{ \sum\limits_{k \in \mathbb{Z}}\! \bar{H}^\mathrm{a}(\tilde{f}_0 - k f_\mathrm{s}) + \frac{1}{2^{2b}} } \!\geq\! \frac{ \sum\limits_{k \in \mathbb{Z}}\!\bar{H}^\mathrm{c}(\tilde{f}_0 - k f_\mathrm{s})S_{\mathsf{x}}(\tilde{f}_0 - k f_\mathrm{s})\!\! }{ \sum\limits_{k \in \mathbb{Z}}\! \bar{H}^\mathrm{c}(\tilde{f}_0 - k f_\mathrm{s}) + \frac{1}{2^{2b}} }.
				\end{equation*}
				Consequently, employing $\{ \bar{H}^\mathrm{a}(f) \}_{f \in \mathbb{R}}$ instead of $\{ \bar{H}^\mathrm{c}(f) \}_{f \in \mathbb{R}}$ can only result in an improved objective, which concludes the proof.
			\end{proof}
			
			From \cref{lem:at_most_one_aliased_component} it follows that the maximization problem given in \eqref{eq:mse_obj_full_constrained} can equivalently be written as
			\begin{subequations}
				\label{eq:mse_obj_full_constrained_lemma}
				\begin{IEEEeqnarray}{ll}
					\label{eq:mse_obj_full_constrained_obj_lemma}
					\max_{\tilde{H}(f) \geq 0} \quad & \int_{-\frac{f_\mathrm{s}}{2}}^{\frac{f_\mathrm{s}}{2}} \frac{ \tilde{H}(f) \tilde{S}_{\mathsf{x}}(f) }{ \tilde{H}(f) + \frac{1}{2^{2b}} } \mathrm{d}f \\
					\label{eq:mse_obj_full_constrained_const_lemma}
					\mathrm{~~s.t.}& \bar{\kappa} T_\mathrm{s} \int_{-\frac{f_\mathrm{s}}{2}}^{\frac{f_\mathrm{s}}{2}} \tilde{H}(f') \mathrm{d}f' = 1,
				\end{IEEEeqnarray}
			\end{subequations}
			where we define $\tilde{k}(f) = \argmax_{k \in \mathbb{Z}} S_{\mathsf{x}}(f - k f_\mathrm{s})$, such that $\tilde{S}_{\mathsf{x}}(f) = S_{\mathsf{x}}(f - \tilde{k}(f) f_\mathrm{s}) \, \mathds{1}_{|f|<\frac{f_\mathrm{s}}{2}}(f)$ and $\tilde{H}(f) = \bar{H}(f - \tilde{k}(f) f_\mathrm{s}) \, \mathds{1}_{|f|<\frac{f_\mathrm{s}}{2}}(f)$.
			
			Next, we note that the integrand of the objective \eqref{eq:mse_obj_full_constrained_obj_lemma} is concave in $\tilde{H}(f)$ for $\tilde{H}(f) \geq 0$.
			Hence, the objective \eqref{eq:mse_obj_full_constrained_obj_lemma} is also concave in $\tilde{H}(f)$ for $\tilde{H}(f) \geq 0$, as the integral operation preserves concavity \cite[Sec.~3.2.1]{boyd2004convex}.
			Furthermore, because the objective and the constraints of \eqref{eq:mse_obj_full_constrained_lemma} are differentiable and because the equality constraint \eqref{eq:mse_obj_full_constrained_const_lemma} is affine in $\tilde{H}(f)$, the \gls{kkt} conditions are necessary and sufficient for optimality \cite[Sec.~5.5.3]{boyd2004convex}.
			It can be shown that
			$\tilde{H}(f) = \frac{1}{2^{2b}} \Big( \sqrt{ \zeta \, \tilde{S}_{\mathsf{x}}(f) } - 1 \Big)^+$,
			satisfies these \gls{kkt} conditions, where $\zeta$ has to be chosen such that \eqref{eq:mse_obj_full_constrained_const_lemma} holds (cf.~\cite[Appendix~B]{neuhaus2021task}).
			This proves \eqref{eq:theorem_def_H_tilde_opt}.
			Finally, \eqref{eq:theorem_def_TMSE_opt} is obtained by inserting the objective of \eqref{eq:mse_obj_full_constrained_obj_lemma} into \eqref{eq:mse_prop_1}, which concludes the proof.
		\end{proof}
	\end{appendices}
	
	\bibliographystyle{IEEEtran}
	\bibliography{ref}
	
\end{document}